\journal{Computers \& Fluids}
\theoremstyle{plain}
\newtheorem{theorem}{Theorem}[section]
\newtheorem{lemma}[theorem]{Lemma}
\newtheorem{corollary}[theorem]{Corollary}
\theoremstyle{definition}
\newtheorem{definition}[theorem]{Definition}
\theoremstyle{remark}
\newtheorem{remark}{Remark}
\numberwithin{equation}{section}
\numberwithin{theorem}{section}
\numberwithin{remark}{section}
\newcommand{\gbold}{\bm{g}}
\newcommand{\Tbold}{\bm{T}}
\newcommand{\nbold}{\bm{n}}
\newcommand{\Ubold}{\bm{U}}
\newcommand{\ubold}{\bm{u}}
\newcommand{\vbold}{\bm{v}}
\newcommand{\wbold}{\bm{w}}
\newcommand{\xbold}{\bm{x}}
\newcommand{\ipt}[2]{\left(#1,#2\right)_{\mathcal{T}_h}}
\newcommand{\ipbt}[2]{\left\langle#1,#2\right\rangle_{\partial \mathcal{T}_h}}
\newcommand{\ipbf}[2]{\left\langle#1,#2\right\rangle_{\mathcal{F}_h}}
\newcommand{\iipbf}[2]{\left\langle#1,#2\right\rangle_{\mathcal{F}_h^i}}
\newcommand{\llbracket}{\left[\!\left[}
\newcommand{\rrbracket}{\right] \! \right]}
\newcommand{\llcurve}{\left\{\!\left\{}
\newcommand{\rrcurve}{\right\} \! \right\}}
\begin{document}

\begin{frontmatter}

\title{Versatile Mixed Methods for Non-Isothermal Incompressible Flows}

\author{Edward A. Miller}
\author{Xi Chen}
\author{David M. Williams \corref{mycorrespondingauthor}} 
\address{Department of Mechanical Engineering, The Pennsylvania State University, University Park, Pennsylvania 16802}

\cortext[mycorrespondingauthor]{Corresponding author}
\ead{david.m.williams@psu.edu}

\begin{abstract}
The purpose of this paper is to extend the versatile mixed methods originally developed by Chen and Williams for isothermal flows in ``Versatile Mixed Methods for the Incompressible Navier-Stokes Equations," Computers \& Mathematics with Applications, 2020, (under review), to simulate non-isothermal incompressible flows. These new mixed methods are particularly interesting, as with only minor modifications they can be applied to a much broader range of flows, including non-isothermal weakly-compressible flows, and fully-compressible flows. In the main body of this paper, we carefully develop these mixed methods for solving the Boussinesq model equations. Thereafter, we prove the L2-stability of the discrete temperature field, and assess the practical behavior of the methods by applying them to a set of well-known convection problems. 
\end{abstract}

\begin{keyword}
non-isothermal \sep thermally-coupled \sep incompressible Navier-Stokes \sep mixed finite element methods \sep versatile \sep symmetric
\MSC[2010] 76M10 \sep 65M12 \sep 65M60 \sep 76D05
\end{keyword}

\end{frontmatter}

\section{Introduction} \label{sec;introduction}

The motion of a non-isothermal incompressible fluid is frequently induced by buoyancy forces, viscous forces, and pressure fields. In accordance with standard practices, we refer to the motion that is induced solely by buoyancy forces as \emph{natural} or \emph{free} convection, the motion that is induced solely by viscous forces and pressure fields as \emph{forced} convection, and the motion that is induced by all three factors as \emph{mixed} convection. In order to characterize the various types of convection, one may solve the incompressible Navier-Stokes equations for mass and momentum conservation, in conjunction with a temperature equation (usually obtained from the internal energy or enthalpy equations). In addition, one may couple the momentum and temperature equations via the approach of Oberbeck~\cite{oberbeck1879warmeleitung} and Boussinesq~\cite{boussinesq1903theorie} by adding a temperature-dependent buoyancy term to the RHS of the momentum equation. The buoyancy term is assumed to be directly proportional to changes in the temperature field, and these changes are assumed to be small enough such that the density remains constant. This approximation is frequently referred to as the Boussinesq model~\cite{zeytounian2003joseph}, or (less commonly) the Oberbeck-Boussinesq model~\cite{dallmann2016stabilized}. For practical applications, it is usually necessary to solve the Boussinesq model in the vicinity of complicated geometries, using unstructured meshes. As a result, our preference is to use finite element methods for solving the model because of their ability to operate on both structured and unstructured meshes, while simultaneously achieving high-order accuracy, stability, and robustness.  

In what follows, we briefly review some previous efforts to apply finite element methods to the Boussinesq model. Some of the earliest work in this area was performed by Laskaris~\cite{laskaris1975finite} who used a high-order continuous Galerkin (CG) method to simulate channel flows with heated walls. In addition, Young et al.~\cite{young1976steady,young1976unsteady} and Tabarrok and Lin~\cite{tabarrok1977finite} used a similar approach to study natural convection in heated cavities. Next, Gartling~\cite{gartling1977convective} used a CG method to simulate a thin-walled tube with wall heat transfer, a rectangular heat exchanger, and a heated hexagonal cylinder in a cooled cavity. Thereafter, Marshall et al.~\cite{marshall1978natural} used a high-order CG method with a penalty function (for enforcement of the dilatational constraint) to simulate a heated cavity. This was the first time that a finite element method was successfully applied to natural convection problems for a wide range of Rayleigh numbers ($10^4 -10^7$). Based on this work, Reddy and Satake~\cite{reddy1980comparison} formulated an alternative CG method, and used it to simulate heated, non-convex, straight-sided cavities. It is important to note that all of the early work described above was limited to two-dimensional geometries. Fortunately, with the advent of more powerful computers and more advanced stabilization strategies, such as the Galerkin Least Squares (GLS) approach~\cite{hughes1987new,hughes1989new,baiocchi1993virtual}, the solutions to three-dimensional problems became possible. Some of the early work in this area was performed by Tang and Tsang~\cite{tang1994least,tang1997temporal}, who used least-square finite element methods to simulate  three-dimensional heated cavities, and accurately reproduce the dynamics of Rayleigh-B\'enard convection cells. A detailed review of the latest efforts to apply finite element methods to natural and mixed convection problems is beyond the scope of the present article. However, the interested reader may consult~\cite{reddy2010finite,dallmann2015finite} for an extensive collection of references on this topic. 

Despite the many applications of finite element methods to the Boussinesq model, there have been a relatively small number of efforts to rigorously analyze the existing methods, or to develop new mixed methods which maintain inf-sup stability. Some pioneering efforts in this area were undertaken by Boland and Layton~\cite{boland1990error,boland1990analysis}, as they derived stability and error estimates for CG methods for steady and unsteady natural convection problems. In addition, they analyzed low-order, non-conforming discontinuous Galerkin (DG) methods. Most notably, they were among the first researchers to recognize the importance of using a skew-symmetrizing procedure to stabilize the convective operator in the temperature equation. Subsequently, their work was expanded by Dorok et al.~\cite{dorok1994aspects} and Bernardi et al.~\cite{bernardi1995couplage}, who developed stability and error estimates for mixed methods. More recently, Codina et al.~\cite{codina2010finite} and L\"{o}we and Lube~\cite{lowe2012projection} developed variational multiscale (VMS) methods for problems with turbulent mixed convection. Within the VMS framework, they constructed rigorous stability estimates and (in the case of~\cite{lowe2012projection}) error estimates for the resulting schemes. Thereafter, Dallmann and Arndt~\cite{dallmann2015finite,dallmann2016stabilized} developed a mixed method which was stabilized using a combination of local projection stabilization~\cite{roos2008robust,matthies2015local}, streamline-upwind stabilization~\cite{brooks1982streamline,hughes1986new,hughes1987recent}, and grad-div stabilization~\cite{franca1988two}. For this method, they rigorously derived stability and error estimates, and produced accurate numerical results for a wide range of steady and unsteady convection problems. Next, Rebollo et al.~\cite{rebollo2018high} developed a mixed method which they stabilized using an interpolation-based operator that acts as a low-pass filter. We note that, although the performance of this method is quite adequate from an accuracy standpoint, it is only weakly consistent. Most recently, de Frutos et al.~\cite{de2019grad} derived an optimal set of stability and error estimates for grad-div stabilized, inf-sup stable mixed methods. These methods are effectively a subset of the methods constructed by Dallmann and Arndt in~\cite{dallmann2015finite,dallmann2016stabilized}. Lastly, we note that there are ongoing efforts to analyze mixed methods for Boussinesq models with temperature-dependent parameters (cf.~\cite{oyarzua2017analysis,almonacid2018mixed,allendes2018divergence,almonacid2020new} for several examples).

Due to the limited number of efforts to develop mixed methods (see above), there are still opportunities to improve the robustness, accuracy, and flexibility of the methods. With this in mind, the goal of the present paper is to extend the recently developed versatile mixed methods (see~\cite{chen2020versatile}) to solve the Boussinesq model with constant parameters. For the sake of completeness, let us briefly describe the underlying philosophy of versatile mixed methods: i) we begin with the compressible formulation of the governing equations and then enforce the assumption of constant density, ii) we maintain the presence of dilatational terms (and similar terms) that would usually be neglected, and iii) we discretize the resulting formulation using standard, inf-sup stable, mixed methods. This approach has several advantages, as most importantly, it can be immediately applied to weakly-compressible flows, and furthermore, it ensures that the dilatational constraint is enforced in a consistent fashion in each of the governing equations. In~\cite{chen2020versatile}, this philosophy was applied to the isothermal incompressible Navier-Stokes equations. There, we used the full compressible stress tensor (with the dilatational component) in the momentum conservation equation, and we rigorously proved the stability of the discrete velocity field. The resulting methods were successfully applied to isothermal Taylor-Green and Gresho vortex problems. In this work, we apply the same methods to non-isothermal incompressible flows.

The format of this paper is as follows. In section 2, we formally introduce the Boussinesq model equations for non-isothermal incompressible flows and we develop the notation and mathematical machinery for discretizing these equations. In sections 3 and 4, we introduce the versatile mixed methods, and prove the stability of the discrete temperature field. In section 5, we apply the methods to a set of standard benchmark problems involving natural and mixed convection. Finally, in section 6, we conclude with a summary of our work and a few final remarks.


\section{Preliminaries} \label{prelim_section}

Let us start by introducing a domain $\Omega_t = \left(0, t_n \right) \times \Omega$, where $\Omega \in \mathbb{R}^{d}$ is a spatial domain and $\left(0, t_n\right) \in \mathbb{R}$ is a temporal domain. In a natural fashion, we denote the spatial and temporal coordinates by $\xbold$ and $t$, and we denote the spatial and temporal derivatives by $\nabla \left(\cdot\right)$ and $\partial_t \left(\cdot \right)$, respectively. We assume $d = 2$ or 3, and that the domain boundary $\partial \Omega$ is composed from straight line segments (for the case of $d =2$) and planar faces (for the case of $d=3$). Inside the domain $\Omega_t$, we are interested in simulating the motion of a homogeneous, non-isothermal, incompressible fluid with a constant density $\rho_0$, and non-constant velocity, temperature, and pressure fields $\ubold = \ubold \left(t, \xbold\right)$, $T = T \left(t, \xbold\right)$, and $p = p \left(t, \xbold\right)$. Since the density is constant, we find it convenient to divide the governing equations by $\rho_0$, and then introduce density-weighted quantities, such as $\widetilde{p} = p/\rho_0$ (the kinematic pressure). We introduce the tilde symbol to avoid abuses of notation which can result from ignoring differences between density scaled and unscaled quantities. Now, having established the necessary background, we present the Boussinesq model for non-isothermal flows
\begin{align}
&\nabla \cdot \ubold = 0, &\text{in} \; \, \Omega_t \label{mass_cons} \\[1.5ex]
&\partial_t \, \ubold + \nabla \cdot \left( \ubold \otimes \ubold + \widetilde{p} \, \mathbb{I} \right) - \nabla \cdot \widetilde{\bm{\tau}} = - \beta T \bm{g} + \widetilde{\bm{f}}_{\ubold},  & \text{in} \; \, \Omega_t \label{moment_cons}\\[1.5ex]
&\partial_t T + \nabla \cdot \left( T  \ubold  \right) - \nabla \cdot \left( \alpha \gamma \nabla T  \right) = \frac{1}{C_v} \Big[ \widetilde{\bm{\tau}} : \nabla \ubold -\widetilde{p} \left(\nabla \cdot \ubold\right) \Big] + \widetilde{f}_{T}, &\text{in} \; \, \Omega_t. \label{energy_cons} 
\end{align}
These equations are subject to the following boundary and initial conditions 
\begin{align}
&\ubold=0, &\text{on} \; \, \partial \Omega_t,\\[1.5ex]
&T=0, &\text{on} \; \, \partial \Omega_t,\\[1.5ex]
&\ubold(0,\xbold)=\ubold_0(\xbold), &\text{in} \; \, \Omega, \\[1.5ex]
&T(0,\xbold)=T_0(\xbold), &\text{in} \; \, \Omega.
\end{align}
Furthermore, in order to close the equations, we define $\widetilde{\bm{\tau}}$ as the stress tensor
\begin{align}
\widetilde{\bm{\tau}} = \nu \left( \nabla \ubold + \nabla \ubold^T - \frac{2}{3} \left(\nabla \cdot \ubold \right) \mathbb{I} \right), \label{stress_tensor}
\end{align}
$\bm{g}$ as the gravitational acceleration (where $g_i = -g \delta_{id}$ with $g = const$), $\widetilde{\bm{f}}_{\ubold}$ as a source term for the linear momentum,  $\widetilde{f}_{T}$ as a source term for the temperature, $C_v$ as the specific heat at constant volume, $C_p$ as the specific heat at constant pressure, $\gamma = C_p/C_v$ as the ratio of specific heats, $\alpha = \kappa/\left(C_p \, \rho_0\right)$ as the thermal diffusivity coefficient,  $\beta$ as the thermal expansion coefficient, $\kappa$ as the thermal conductivity coefficient, $\nu = \mu/\rho_0$ as the kinematic viscosity coefficient, and $\mu$ as the dynamic viscosity coefficient.

Before proceeding further, it is important to note that our equations for the temperature and the stress tensor (Eqs.~\eqref{energy_cons} and~\eqref{stress_tensor}) are unconventional. In particular, it is common practice to neglect the viscous dissipation and pressure work terms on the RHS of Eq.~\eqref{energy_cons}, such that
\begin{align}
     \partial_t T + \nabla \cdot \left( T  \ubold  \right) - \nabla \cdot \left( \alpha \gamma \nabla T  \right) = \widetilde{f}_{T}. \label{energy_cons_standard}
\end{align}
In addition, most researchers neglect the divergence and gradient transpose terms on the RHS of Eq.~\eqref{stress_tensor}, as follows
\begin{align}
    \widetilde{\bm{\tau}} = \nu \nabla \ubold. \label{stress_tensor_standard}
\end{align}
However, we prefer to use Eqs.~\eqref{energy_cons} and \eqref{stress_tensor} due to their superior physical accuracy, flexibility, and discrete consistency. We refer the interested reader to~\cite{chen2020versatile} for a detailed discussion of our motivation for using the full stress tensor (Eq.~\eqref{stress_tensor}). In what follows, we will only discuss our motivation for using the full temperature equation (Eq.~\eqref{energy_cons}). 

\begin{enumerate}
    \item The formulation in Eq.~\eqref{energy_cons} contains the viscous dissipation term, and thereby successfully captures the physical conversion of kinetic energy into internal energy (heat). Of course, the viscous dissipation term will be small in most incompressible flows, however it will rarely completely vanish. Therefore, by neglecting this term, we introduce a small but unnecessary amount of error into the final solution. Furthermore, this error is difficult to control, as it does not vanish in the asymptotic limit as the element size goes to zero, or the polynomial order goes to infinity.
    \item The formulation in Eq.~\eqref{energy_cons} is more suitable for adaptation to \emph{compressible} flows, as it retains the pressure work and viscous dissipation terms which become increasingly important in these types of flows. Retaining these terms helps facilitate flexibility of the resulting methods, and encourages code-reuse between incompressible and compressible CFD~codes. 
    \item The formulation in Eq.~\eqref{energy_cons} is more `consistent', as it enables consistent enforcement of the dilatational constraint. In order to see this, we begin by noting that Eq.~\eqref{energy_cons} retains the pressure work term, which is guaranteed to vanish at the continuous level (by Eq.~\eqref{mass_cons}), but which may or may not vanish at the discrete level. Evidently, for pointwise divergence-free methods, the pressure term vanishes in both cases, but for more general methods, the dilatation term typically only vanishes in the weak sense, and the pressure term is non-zero. Therefore, neglecting the pressure term \emph{a priori} is inconsistent, as this effectively forces the dilatation contribution to vanish pointwise in the temperature equation, even though it may only vanish weakly in the mass conservation equation. Naturally, we prefer to use Eq.~\eqref{energy_cons}, as it avoids this inconsistency. 
\end{enumerate}
In summary, we have introduced a `versatile' approach in which we solve Eqs.~\eqref{mass_cons}--\eqref{energy_cons} in conjunction with the stress tensor in Eq.~\eqref{stress_tensor}. In what follows, we will introduce the necessary machinery for discretizing these equations.

In accordance with standard practices, we tessellate the spatial domain $\Omega$ with a mesh $\mathcal{T}_h$. The mesh is composed from straight-sided, $d$-dimensional triangular or cubic elements $K$, with characteristic size $h$. The faces of elements on the perimeter of the mesh are required to exactly conform to the domain boundaries, and the union of all the elements is required to cover the domain. In addition, for the sake of simplicity the elements are required to be non-overlapping, and the mesh is required to be devoid of hanging nodes. The boundary of each element $K$ is denoted by $\partial K$ and the outward-pointing unit normal vector on this boundary is denoted by $\nbold$. Elements are considered to be `face neighbors' if they share a $(d-1)$-dimensional face $F$. We denote the unit normal vector that points from the positive side to the negative side of the shared face as $\nbold_{+}$, and naturally $\nbold_{-} = -\nbold_{+}$. The total collection of faces in the mesh is denoted by $\mathcal{F}_h$, and the faces of a single element $K$ are denoted by $\mathcal{F}_{K} = \left\{ F \in \mathcal{F}_h : F \subset \partial K \right\}$. The set of interior faces is denoted by $\mathcal{F}_h^i = \{F \in \mathcal{F}_h : F \cap \partial\Omega = \emptyset \}$ and the set of boundary faces by $\mathcal{F}_h^{\partial} = \{F \in \mathcal{F}_h : F \cap \partial\Omega \neq \emptyset \}$. Finally, for a given face $F$, we can define a normal vector $\nbold_F$ which points from the positive to the negative side of the face. 

Next, one may define jump $\llbracket \cdot \rrbracket$ and average $\llcurve \cdot \rrcurve$ operators for an interior face $F \in \mathcal{F}_h^i$ as follows
\begin{align*}
\llbracket \phi \rrbracket &= \phi_{+} - \phi_{-}, \qquad \llbracket \phi \nbold \rrbracket = \phi_{+} \nbold_{+} + \phi_{-} \nbold_{-}, \qquad \llcurve \phi \rrcurve = \frac{1}{2} \left( \phi_{+} + \phi_{-} \right), \\[1.5ex]
\llbracket \vbold \rrbracket &= \vbold_{+} - \vbold_{-}, \qquad \llbracket \vbold \otimes \nbold \rrbracket = \vbold_{+} \otimes \nbold_{+} + \vbold_{-} \otimes \nbold_{-}, \qquad  \llcurve \vbold \rrcurve = \frac{1}{2} \left( \vbold_{+} + \vbold_{-} \right),
\end{align*}
where $\phi$ is a generic scalar function, and $\vbold$ is a generic vector function. Similarly, for all boundary faces $F\in \mathcal{F}_h^{\partial}$, one may define
\begin{align*}
\llbracket \phi \rrbracket &= \phi, \qquad \llbracket \phi \nbold \rrbracket = \phi \nbold, \qquad \llcurve \phi \rrcurve = \phi, \\[1.5ex]
\llbracket \vbold \rrbracket &= \vbold, \qquad \llbracket \vbold \otimes \nbold \rrbracket = \vbold \otimes \nbold, \qquad  \llcurve \vbold \rrcurve = \vbold.
\end{align*}

In addition, it is convenient to introduce some standard notation for representing inner products. With this in mind, let us introduce a generic vector $\wbold$ and generic tensors $\Tbold$ and $\Ubold$. Note: here, we assume that $\vbold$, $\wbold$, $\Tbold$, $\Ubold$, and $\phi$ are sufficiently smooth, such that the associated integrations are possible. Based on this assumption, we can define
\begin{align*}
\ipt{\vbold}{\wbold} & = \sum_{K \in \mathcal{T}_h} \int_{K} \vbold \cdot \wbold \, dV, \qquad \ipt{\Tbold}{\Ubold} = \sum_{K \in \mathcal{T}_h} \int_{K} \Tbold : \Ubold \, dV, \\[1.5ex]
\ipbt{\vbold}{\wbold} &= \sum_{K \in \mathcal{T}_h} \int_{\partial K} \vbold \cdot \wbold \, dA, \qquad \ipbt{\Tbold}{\Ubold} = \sum_{K \in \mathcal{T}_h} \int_{\partial K} \Tbold : \Ubold \, dA, \\[1.5ex]
\ipbf{\vbold}{\wbold} & = \sum_{F \in \mathcal{F}_h} \int_{F} \vbold \cdot \wbold \, dA, \qquad \ipbf{\Tbold}{\Ubold} = \sum_{F \in \mathcal{F}_h} \int_{F} \Tbold : \Ubold \, dA.
\end{align*}
Using this notation, we can introduce the well-known integration by parts formulas
\begin{align*}
\int_{\partial K} \phi \left( \vbold \cdot \nbold \right) dA & = \int_{K} \left(\phi \left(\nabla\cdot \vbold \right) + \vbold \cdot \nabla \phi  \right) dV, \\[1.5ex]
\int_{\partial K} \vbold \cdot \Tbold \nbold \, dA &= \int_{K} \left(\vbold \cdot \left( \nabla \cdot \Tbold \right) + \Tbold : \nabla \vbold \right) dV,
\end{align*}
which can be rewritten as
\begin{align*}
\left\langle \phi \vbold, \nbold \right\rangle_{\partial K} &= \left( \phi, \nabla \cdot \vbold \right)_{K} + \left( \vbold, \nabla \phi \right)_{K}, \\[1.5ex]
\left\langle \vbold, \bm{T} \nbold \right\rangle_{\partial K} &= \left(\vbold, \nabla \cdot \bm{T} \right)_{K} + \left(\bm{T}, \nabla \vbold \right)_{K}.
\end{align*}
In what follows, we will conclude this section by defining the standard function spaces for mixed finite element methods. We start by introducing the broken Sobolev space
\begin{align*}
    & \bm{W}^{m,p} (\mathcal{T}_h) = \left\{\wbold  \in \bm{L}^{p}(\Omega), \wbold|_{K} \in\bm{W}^{m,p}(K),~\forall K \in \mathcal{T}_h \right\},
\end{align*}
where $\bm{W}^{m,p} \left(\mathcal{T}_h \right) = \left(W^{m,p} \left(\mathcal{T}_h\right) \right)^d$. Next, we introduce the Hilbert spaces
\begin{align*}
   &\bm{H}_{0}(\text{div};\Omega)= \left\{ \wbold  :  \wbold\in \bm{L}^{2}(\Omega),~\nabla\cdot\wbold \in L^{2}(\Omega),~\wbold \cdot\nbold|_{\partial\Omega}=0\right\},\\[1.5ex] 
   &\bm{H}_{0}^{1}(\Omega)= \left\{ \wbold  :  \wbold\in \bm{H}^{1}(\Omega),~\wbold|_{\partial\Omega}=0\right\},
\end{align*}
where $\bm{H}^{1} \left(\Omega\right) = \left(H^{1} \left(\Omega\right) \right)^d$. Having established these spaces, we can define scalar-valued polynomial spaces $Q_{h}^{DC}$ and $Q_{h}^{C}$ for the pressure, and $R_{h}^{C}$ for the temperature
\begin{align*}
    &Q_h^{DC} = \left\{ q_h  : q_h \in L^2_{\ast} \left( \Omega \right), q_{h} |_{K} \in P_{k} \left( K \right), \forall K \in \mathcal{T}_h \right\}, \\[1.5ex]
    &Q_h^{C} = \left\{ q_h  : q_h \in C^{0} \left( \Omega \right), q_{h} |_{K} \in P_{k} \left( K \right), \forall K \in \mathcal{T}_h \right\} \cap L_{\ast}^2 \left(\Omega \right),
    \\[1.5ex]
    &R_h^{C} = \left\{ r_h  : r_h \in C^0 \left( \Omega \right), r_h |_{K} \in P_{k} \left( K \right), \forall K \in \mathcal{T}_h \right\} \cap H_{0}^1 \left(\Omega \right),
\end{align*}
where $P_{k} \left(K \right)$ is the space of polynomials of degree $\leq k$, and $L^2_{\ast} \left(\Omega\right)$ is the space of $L^2$ functions with zero mean. Furthermore, we can define the vector-valued Raviart-Thomas and Taylor-Hood spaces for the velocity
\begin{align*}
    &\bm{W}_h^{RT}  = \left\{ \wbold_h : \wbold_h \in \bm{H}_{0}\left(\text{div}; \Omega \right), \wbold_h |_{K} \in  \bm{RT}_k \left( K \right), \forall K \in \mathcal{T}_h \right\}, \\[1.5ex]
    &\bm{W}_h^{TH} = \left\{ \wbold_h : \wbold_h \in \bm{C}^{0} \left(\Omega\right), \wbold_h |_{K} \in  \left(P_{k+1} \left( K \right) \right)^d, \forall K \in \mathcal{T}_h \right\} \cap \bm{H}_{0}^{1}(\Omega),
\end{align*}
where $\bm{C}^{0} \left(\Omega\right) = \left(C^{0} \left(\Omega\right) \right)^d$, and
\begin{align*}
\bm{RT}_k \left(K \right) = \left(P_k \left(K \right) \right)^d \oplus P_k \left(K \right) \xbold.
\end{align*}
Lastly, we can introduce $\bm{W}_h^{BDM}$, the Brezzi-Douglas-Marini space (see~\cite{boffi2013mixed} for an explicit definition of this space).

\section{Versatile Mixed Methods}

In this section, we develop a general class of mixed methods for solving  Eqs.~\eqref{mass_cons} -- \eqref{energy_cons}. The methods can be constructed using the following steps: i) choose function spaces $Q_h \subset L^2_{\ast} \left(\Omega \right)$, $R_h \subset H^1_0 \left(\Omega\right)$, and $\bm{W}_h \subset \bm{H}_{0}(\text{div};\Omega)$, ii) identify test functions $\left(q_h, r_h, \wbold_h \right)$ that span $Q_h \times R_h \times \bm{W}_h$, and iii) find unknowns $\left(\widetilde{p}_h, T_h, \ubold_h \right)$ in $Q_h  \times R_h \times \bm{W}_h$ that satisfy 
\begin{align}
& \ipt{ \nabla \cdot \ubold_h}{q_h} = 0, \label{mass_cons_disc}
\end{align}
\begin{align}
\nonumber & \ipt{\partial_t \ubold_h }{\wbold_h} - \ipt{ \ubold_h \otimes \ubold_h}{\nabla_h \wbold_h} - \ipt{\widetilde{p}_h}{\nabla \cdot \wbold_h} + \ipbt{\widehat{\bm{\sigma}}_{\text{inv}} \, \nbold}{\wbold_h} \\[1.5ex]  
\nonumber & + \nu_h \bigg[ \ipt{\nabla_h \ubold_h + \nabla_h \ubold_{h}^{T} - \frac{2}{3} \left(\nabla \cdot \ubold_h \right) \mathbb{I}}{\nabla_h \wbold_h} - \ipbt{\widehat{\bm{\sigma}}_{\text{vis}} \, \nbold}{\wbold_h} \\[1.5ex]
\nonumber & + \ipbt{\widehat{\bm{\varphi}}_{\text{vis}} - \ubold_h}{\left( \nabla_h \wbold_h + \nabla_h \wbold_{h}^{T} - \frac{2}{3} \left(\nabla \cdot \wbold_h \right) \mathbb{I} \right) \nbold} \bigg] -\frac{1}{2} \ipt{\left(\nabla \cdot \bm{u}_h \right) \ubold_h}{\wbold_h} \\[1.5ex]
&=  - \ipt{\beta_h T_h \gbold}{\wbold_h} +  \ipt{\widetilde{\bm{f}}_{\ubold}}{\wbold_h}, \label{moment_cons_disc}
\end{align}
\begin{align}
& \nonumber \ipt{\partial_t T_h}{r_h} - \ipt{T_h \ubold_h}{\nabla_h r_h} + \ipbt{\widehat{\bm{\phi}}_{\text{inv}}  \cdot \nbold}{r_h}  \\[1.5ex]
\nonumber & + \alpha_h \gamma_h \Bigg[ \ipt{\nabla_h T_h}{\nabla_h r_h}  -\ipbt{\widehat{\bm{\phi}}_{\text{vis}}  \cdot \nbold}{r_h} \\[1.5ex]
\nonumber &+ \ipbt{\widehat{\lambda}_{\text{vis}} - T_h}{\nabla_h r_h \cdot \nbold} \Bigg] -\frac{1}{2} \ipt{\left(\nabla \cdot \ubold_h \right) T_h}{r_h} \\[1.5ex]
\nonumber &  = \ipt{\frac{1}{C_v} \Bigg[ \nu_h \left(\nabla_h \ubold_h + \nabla_h \ubold_{h}^{T} - \frac{2}{3} \left(\nabla \cdot \ubold_h \right) \mathbb{I} \right):\nabla_h \ubold_h - \widetilde{p}_h \left(\nabla \cdot \ubold_h \right)\Bigg]}{r_h} \\[1.5ex]
& + \ipt{\widetilde{f}_{T}}{r_h}. \label{energy_cons_disc}
\end{align}
Here, we note that the quantities with hats (for example~$\widehat{\bm{\sigma}}_{\text{inv}}$) denote numerical fluxes. An array of possible formulas for the fluxes are given in section~\ref{numerical_flux_section} of the Appendix. By substituting these formulas into Eqs.~\eqref{mass_cons_disc} -- \eqref{energy_cons_disc}, one may rewrite the equations in standard form as follows
\begin{align}
& b_h \left(\ubold_h, q_h \right) = 0, \label{incomp_form_one_A} \\[1.5ex]
\nonumber & \ipt{\partial_t \, \ubold_h}{\wbold_h} + c_h \left(\ubold_h; \ubold_h, \wbold_h \right) +\nu_h a_h \left(\ubold_h, \wbold_h \right)  - b_h \left( \wbold_h, \widetilde{p}_h \right) \\[1.5ex]
&= - \ipt{\Xi \left(T_h\right)}{\wbold_h} + \ipt{\widetilde{\bm{f}}_{\ubold}}{\wbold_h}, \label{incomp_form_one_B}  \\[1.5ex]
&\nonumber \ipt{\partial_t T_h}{r_h} + \underline{c}_h\left(\ubold_h; T_h, r_h\right) + \alpha_h \gamma_h \, \underline{a}_h \left(T_h, r_h \right) \\[1.5ex]
&= \ipt{\Phi \left(\ubold_h\right) + \Psi \left(\ubold_h, \widetilde{p}_h \right)}{r_h} +  \ipt{\widetilde{f}_{T}}{r_h}. \label{incomp_form_one_C} 
\end{align}
Next, we must define the operators $a_h$, $b_h$, $c_h$, $\underline{a}_h$, $\underline{c}_h$, $\Xi$, $\Phi$, and $\Psi$. In order to setup these definitions, we introduce functions $q_h \in Q_h$, $r_h$ and $\theta_h \in R_h$, and $\vbold_h, \wbold_h$ and  $\bm{\xi}_h \in \bm{W}_h$. Thereafter, we expand the operators in Eqs.~\eqref{incomp_form_one_A} and~\eqref{incomp_form_one_B} as follows
\begin{align}
    b_h \left(\vbold_h, q_h \right) &= \ipt{\nabla \cdot \vbold_h}{q_h}, \label{bilinear_press_div} \\[1.5ex]
    c_h \left(\bm{\xi}_h; \vbold_h, \wbold_h \right) &= \ipt{\bm{\xi}_h \cdot \nabla_h \vbold_h}{\wbold_h} + \frac{1}{2} \ipt{\left(\nabla \cdot \bm{\xi}_h \right) \vbold_h}{\wbold_h} \label{trilinear} \\[1.5ex]
    & \nonumber - \iipbf{ \left( \bm{\xi}_h \cdot \nbold_F \right) \llbracket \vbold_h \rrbracket}{\llcurve \wbold_h \rrcurve} + \zeta \iipbf{\left| \bm{\xi}_h \cdot \nbold_F \right|\llbracket \vbold_h \rrbracket}{\llbracket \wbold_h \rrbracket}, 
\end{align}
\begin{align}
    a_h \left(\vbold_h, \wbold_h \right) &=  \ipt{ \nabla_h \vbold_h + \nabla_h \vbold_{h}^{T} - \frac{2}{3} \left(\nabla \cdot \vbold_h \right) \mathbb{I}}{\nabla_h \wbold_h} \label{diff_bilinear} \\[1.5ex]
    \nonumber & -\ipbf{\llbracket \vbold_h \rrbracket}{\llcurve  \nabla_h \wbold_h + \nabla_h \wbold_{h}^{T} - \frac{2}{3} \left(\nabla \cdot \wbold_h \right) \mathbb{I} \rrcurve \nbold_F}  \\[1.5ex]
    \nonumber & -\ipbf{\llbracket \wbold_h \rrbracket}{\llcurve  \nabla_h \vbold_h + \nabla_h \vbold_{h}^{T} - \frac{2}{3} \left(\nabla \cdot \vbold_h \right) \mathbb{I} \rrcurve \nbold_F} + \ipbf{\frac{\eta}{h_F} \llbracket \vbold_h \rrbracket}{\llbracket \wbold_h \rrbracket},
\end{align}
\begin{align}
    \Xi \left(r_h\right) = \beta_h r_h \gbold. \label{grav_term}
\end{align}
In addition, the operators in Eq.~\eqref{incomp_form_one_C} can be expanded as follows
\begin{align}
\underline{c}_h \left(\bm{\xi}_h; \theta_h, r_h \right) &= \ipt{\bm{\xi}_h \cdot \nabla_h \theta_h}{r_h} + \frac{1}{2} \ipt{\left(\nabla \cdot \bm{\xi}_h \right) \theta_h}{r_h} \label{trilinear_temp}\\[1.5ex]
& \nonumber - \iipbf{ \left( \bm{\xi}_h \cdot \nbold_F \right) \llbracket \theta_h \rrbracket}{\llcurve r_h \rrcurve} + \delta \iipbf{\left| \bm{\xi}_h \cdot \nbold_F \right|\llbracket \theta_h \rrbracket}{\llbracket r_h \rrbracket}, 
\end{align}
\begin{align}
    \underline{a}_h \left(\theta_h, r_h\right) &= \ipt{\nabla_h \theta_h}{\nabla_h r_h}  -\ipbf{\llbracket \theta_h \rrbracket}{\llcurve \nabla_h r_h \rrcurve \cdot \nbold_F}  \label{diff_bilinear_temp} \\[1.5ex]
    \nonumber &-\ipbf{\llbracket r_h \rrbracket}{\llcurve \nabla_h \theta_h \rrcurve \cdot  \nbold_F} + \ipbf{\frac{\varepsilon}{h_F} \llbracket \theta_h \rrbracket}{\llbracket r_h \rrbracket},
\end{align}
\begin{align}
    \Phi \left(\vbold_h\right) &= \frac{\nu_h}{C_v} \left(  \left(\nabla_h \vbold_h + \nabla_h \vbold_{h}^{T} - \frac{2}{3} \left(\nabla \cdot \vbold_h \right) \mathbb{I} \right):\nabla_h \vbold_h \right), \label{visc_dissipation} \\[1.5ex]
    \Psi \left(\vbold_h, q_h\right) & = - \frac{1}{C_v} \left( q_h \left(\nabla \cdot \vbold_h \right) \right).
\end{align}
It is possible to simplify these expressions in the particular case when the method is pointwise divergence-free. One may consult section~\ref{div_free_section} of the Appendix for details.

\section{Analysis of Versatile Mixed Methods}

In this section, we rigorously analyze the stability of the versatile mixed methods which were introduced in section 3. In preparation for this analysis, we first define a special set of norms on broken Sobolev spaces. Thereafter, we establish the coercivity of the bilinear form $\underline{a}_h$ (Eq.~\eqref{diff_bilinear_temp}) and the semi-coercivity of the trilinear form $\underline{c}_h$ (Eq.~\eqref{trilinear_temp}). Next, we use these results to prove the L2-stability of the discrete temperature field. Finally, we discuss the relationship between the stability properties of the discrete temperature and velocity fields.

\subsection{Norm Definitions}

\begin{definition}[Gradient Norm] \label{norm_def_standard}
    Consider the scalar-valued function $r \in W^{1,p} \left(\mathcal{T}_h\right)$. Then,
    \begin{align*}
        \left\| r \right\|_{\text{grad},p} &= \left[ \left\| \nabla_h r \right\|_{\bm{L}^p\left(\Omega\right)}^{p} + \sum_{F \in \mathcal{F}_h} \frac{1}{h_{F}^{p-1}} \left\| \llbracket r \rrbracket \right\|_{L^{p} \left(F\right)}^{p} \right]^{1/p} \\[1.5ex]
        & = \left[ \sum_{K \in \mathcal{T}_h} \int_{K} \left( \sum_{j}^d \left| \partial_{j} r \right|^p \right) dV + \sum_{F \in \mathcal{F}_h} \frac{1}{h_{F}^{p-1}} \int_{F}  \left| \llbracket r \rrbracket \right|^{p} dA \right]^{1/p},
    \end{align*}
    is a norm on $\Omega$. In a similar fashion, for the vector-valued function $\wbold \in \bm{W}^{1,p} \left(\mathcal{T}_h\right)$, we have
    \begin{align}
        \nonumber \left\| \wbold \right\|_{\text{grad},p} &= \left[ \left\| \nabla_h \wbold \right\|_{\bm{L}^p\left(\Omega\right) \times \bm{L}^p\left(\Omega\right)}^{p} + \sum_{F \in \mathcal{F}_h} \frac{1}{h_{F}^{p-1}} \left\| \llbracket \wbold \rrbracket \right\|_{\bm{L}^{p} \left(F\right)}^{p} \right]^{1/p} \\[1.5ex]
        & = \left[ \sum_{K \in \mathcal{T}_h} \int_{K} \left( \sum_{i,j}^d \left| \partial_{j} w_i \right|^p \right) dV + \sum_{F \in \mathcal{F}_h} \frac{1}{h_{F}^{p-1}} \int_{F} \left( \sum_{i}^{d} \left| \llbracket w_i \rrbracket \right|^{p} \right) dA \right]^{1/p}. \label{norm_standard}
    \end{align}
\end{definition}

\begin{definition}[Full Symmetric Gradient Norm] \label{norm_def_versatile}
    Consider the vector-valued function $\wbold \in \bm{W}^{1,p} \left(\mathcal{T}_h\right)$. Then,
    \begin{align}
        \nonumber \left\| \wbold \right\|_{\text{sym},p} &= \left[ \left\| \nabla_h \wbold + \nabla_h \wbold^T - \frac{2}{3} \left(\nabla_h \cdot \wbold\right) \mathbb{I} \right\|_{\bm{L}^p\left(\Omega\right) \times \bm{L}^p\left(\Omega\right)}^{p} + \sum_{F \in \mathcal{F}_h} \frac{1}{h_{F}^{p-1}} \left\| \llbracket \wbold \rrbracket \right\|_{\bm{L}^{p} \left(F\right)}^{p} \right]^{1/p} \\[1.5ex]
        \nonumber & = \Bigg[ \sum_{K \in \mathcal{T}_h} \int_{K} \left( \sum_{i,j}^d \left| \partial_{j} w_i + \partial_{i} w_j - \frac{2}{3} \left(\sum_{k}^{d} \partial_k w_k \right) \delta_{ij} \right|^p \right) dV 
        \\[1.5ex]
        &+ \sum_{F \in \mathcal{F}_h} \frac{1}{h_{F}^{p-1}} \int_{F} \left( \sum_{i}^{d} \left| \llbracket w_i \rrbracket \right|^{p} \right) dA \Bigg]^{1/p}, \label{norm_versatile}
    \end{align}
    is a norm on $\Omega$. 
\end{definition}

\subsection{Analysis of Bilinear and Trilinear Forms}

\begin{lemma}[Coercivity of the Viscous Bilinear Form]
Suppose we choose a generic test function $r_h \in R_h$, and we assume that $d = 2$ or 3. Furthermore, we choose $\varepsilon > C_{\text{tr},2}^2N_{\partial}$, where $C_{\text{tr},2}$ and $N_{\partial}$ are constants which depend on the mesh topology. Then, the bilinear form $\underline{a}_h$ in Eq.~\eqref{diff_bilinear_temp} is coercive on $R_h$, such that
\begin{align}
    \forall r_h \in R_h,\qquad \underline{a}_h \left( r_h, r_h \right) \geq C_{I} \left\| r_h \right\|_{\text{grad},2}^2, \label{coercive_res}
\end{align}
where $C_{I} =  \left(\varepsilon - C_{\text{tr},2}^2 N_{\partial}\right)/\left(1+\varepsilon\right)$ is a positive constant independent of $h$.
\label{coercive_vis_temp_lemma}
\end{lemma}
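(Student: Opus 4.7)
The plan is to follow the standard coercivity argument for the symmetric interior penalty bilinear form, adapted here to the temperature space $R_h$. First I would set $\theta_h = r_h$ in the definition of $\underline{a}_h$ in Eq.~\eqref{diff_bilinear_temp}, which collapses the two consistency/symmetry face integrals into a single term, giving
\begin{align*}
\underline{a}_h(r_h, r_h) = \|\nabla_h r_h\|_{L^2(\Omega)}^2 \; - \; 2 \ipbf{\llbracket r_h \rrbracket}{\llcurve \nabla_h r_h \rrcurve \cdot \nbold_F} \; + \; \ipbf{\tfrac{\varepsilon}{h_F} \llbracket r_h \rrbracket}{\llbracket r_h \rrbracket}.
\end{align*}

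Next I would control the mixed face term by a weighted Cauchy--Schwarz followed by Young's inequality with parameter $a > 0$, chosen later. Specifically, for each face $F$,
\begin{align*}
2\left| \int_F \llbracket r_h \rrbracket \, \llcurve \nabla_h r_h \rrcurve \cdot \nbold_F \, dA \right| \leq \frac{a}{h_F} \|\llbracket r_h \rrbracket\|_{L^2(F)}^2 + \frac{h_F}{a}\|\llcurve \nabla_h r_h \rrcurve\|_{L^2(F)}^2.
\end{align*}
The crucial step is then to invoke the discrete trace inequality $\|\nabla r_h\|_{L^2(F)}^2 \leq C_{\text{tr},2}^2 h_K^{-1}\|\nabla r_h\|_{L^2(K)}^2$ on each element touching $F$, together with the shape-regularity bound $h_F \simeq h_K$. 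After summing over all faces and using that each element contributes to at most $N_{\partial}$ faces, I would obtain
\begin{align*}
\sum_{F \in \mathcal{F}_h} h_F \|\llcurve \nabla_h r_h \rrcurve\|_{L^2(F)}^2 \leq C_{\text{tr},2}^2 N_{\partial} \|\nabla_h r_h\|_{L^2(\Omega)}^2.
\end{align*}

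Combining these bounds gives
\begin{align*}
\underline{a}_h(r_h, r_h) \; \geq \; \left(1 - \frac{C_{\text{tr},2}^2 N_{\partial}}{a}\right)\|\nabla_h r_h\|_{L^2(\Omega)}^2 + (\varepsilon - a) \sum_{F \in \mathcal{F}_h} \frac{1}{h_F}\|\llbracket r_h \rrbracket\|_{L^2(F)}^2.
\end{align*}
At this stage the main task is to select $a$ so that both coefficients are positive and bounded below by the stated constant $C_I = (\varepsilon - C_{\text{tr},2}^2 N_{\partial})/(1+\varepsilon)$. I would take $a = (\varepsilon^2 + C_{\text{tr},2}^2 N_{\partial})/(1+\varepsilon)$, which lies in the interval $(C_{\text{tr},2}^2 N_{\partial}, \varepsilon)$ precisely under the hypothesis $\varepsilon > C_{\text{tr},2}^2 N_{\partial}$. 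A short algebraic check, reducing to the trivial inequality $\varepsilon \geq C_{\text{tr},2}^2 N_{\partial}$, then shows that both coefficients are $\geq C_I$. Recognizing the right-hand side as $C_I \|r_h\|_{\text{grad},2}^2$ per Definition~\ref{norm_def_standard} concludes the argument.

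The routine parts will be the Cauchy--Schwarz and Young bounds; the only genuine obstacle is the bookkeeping needed to verify that the specific constant $C_I$ stated in the lemma is actually attained by the optimal choice of $a$. A secondary subtlety worth pointing out is that the trace inequality is applied face-by-face and then summed, so the constant $N_{\partial}$ enters naturally as the maximum number of faces sharing a common element, which in turn requires shape regularity of $\mathcal{T}_h$.
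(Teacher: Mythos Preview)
Your argument is correct and is precisely the standard symmetric interior penalty coercivity proof; the paper itself does not give a proof but simply cites Di~Pietro and Ern, Lemma~4.12, which carries out exactly the same Cauchy--Schwarz/Young/trace-inequality estimate you outline, with the same constant $C_I = (\varepsilon - C_{\text{tr},2}^2 N_\partial)/(1+\varepsilon)$. One minor observation: in the paper's actual setting $R_h \subset H_0^1(\Omega)$, so all jumps of $r_h$ vanish and the result is in fact trivial with constant $1$; your argument is not wrong, just more general than strictly required here.
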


\begin{proof}
    The proof appears in~\cite{DiPietro11} Lemma 4.12 (p.~129).
\end{proof}

\begin{lemma}[Semi-Coercivity of the Convective Trilinear Form]
Consider test functions $\bm{\xi}_h \in \bm{W}_h$ and $r_h \in R_h$. Then, the trilinear form $\underline{c}_h$ in Eq.~\eqref{trilinear_temp} is semi-coercive on $\bm{W}_h \times R_h$, such that
\begin{align}
    \forall \left(\bm{\xi}_h, r_h \right) \in \bm{W}_h \times R_h, \qquad \underline{c}_h \left(\bm{\xi}_h; r_h, r_h \right) = \left| r_h \right|_{\bm{\xi}_h}^2, \label{convective_coercive_temp}
\end{align}
where
\begin{align}
    \left| r_h \right|_{\bm{\xi}_h} = \left( \delta \iipbf{\left| \bm{\xi}_h \cdot \nbold_F \right|\llbracket r_h \rrbracket}{\llbracket r_h \rrbracket} \right)^{1/2},
\end{align}
is a seminorm on $\Omega$.
\label{coercive_tri_temp_lemma}
\end{lemma}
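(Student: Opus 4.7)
The plan is to set $\theta_h = r_h$ in the definition of $\underline{c}_h$ in Eq.~\eqref{trilinear_temp} and to show that the first three terms cancel exactly by a discrete integration-by-parts identity, leaving only the stabilization contribution. First, I would rewrite the convective term using the product rule $\bm{\xi}_h \cdot \nabla_h r_h \, r_h = \frac{1}{2}\, \bm{\xi}_h \cdot \nabla_h(r_h^2)$, which yields
\[
\ipt{\bm{\xi}_h \cdot \nabla_h r_h}{r_h} = \frac{1}{2}\, \ipt{\bm{\xi}_h}{\nabla_h(r_h^2)}.
\]
Then I would apply the element-wise integration by parts formula from Section~\ref{prelim_section}, taking $\phi = r_h^2$ and $\vbold = \bm{\xi}_h$, to obtain
\[
\frac{1}{2}\, \ipt{\bm{\xi}_h}{\nabla_h(r_h^2)} = \frac{1}{2}\, \ipbt{(\bm{\xi}_h \cdot \nbold)\, r_h^2}{1} - \frac{1}{2}\, \ipt{\nabla \cdot \bm{\xi}_h}{r_h^2}.
\]
The volume term on the right exactly cancels the grad--div-like contribution $\frac{1}{2}\, \ipt{(\nabla \cdot \bm{\xi}_h)\, r_h}{r_h}$ appearing in $\underline{c}_h$, which is precisely the reason that term is included in the definition of the trilinear form.

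The next step would be to recast the surviving element-boundary contribution as an integral over mesh faces. Since $\bm{W}_h \subset \bm{H}_0(\mathrm{div};\Omega)$, the normal trace $\bm{\xi}_h \cdot \nbold$ is single-valued across each interior face $F$ (in the sense that $\nbold_- = -\nbold_+$) and vanishes identically on $\partial\Omega$, so boundary-face contributions drop out entirely. On an interior face $F$ with unit normal $\nbold_F = \nbold_+$, the contributions from the two adjacent elements combine as
\[
(\bm{\xi}_h \cdot \nbold_F)(r_{h,+}^2 - r_{h,-}^2) = 2\, (\bm{\xi}_h \cdot \nbold_F)\, \llbracket r_h \rrbracket\, \llcurve r_h \rrcurve,
\]
using the factorization $a^2 - b^2 = (a-b)(a+b)$ together with the definitions of the jump and average operators. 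Summing over $\mathcal{F}_h^i$ and including the factor $1/2$ yields exactly $\iipbf{(\bm{\xi}_h \cdot \nbold_F)\, \llbracket r_h \rrbracket}{\llcurve r_h \rrcurve}$, which cancels the corresponding face term in $\underline{c}_h$. Only the stabilization contribution $\delta\, \iipbf{|\bm{\xi}_h \cdot \nbold_F|\, \llbracket r_h \rrbracket}{\llbracket r_h \rrbracket} = |r_h|_{\bm{\xi}_h}^2$ then remains, establishing the stated identity.

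The main obstacle I expect is the careful bookkeeping in converting the element-wise boundary sum $\ipbt{\cdot}{\cdot}$ into the face-wise sum $\iipbf{\cdot}{\cdot}$: one must track the outward normals on each of the two neighboring elements and exploit $\bm{H}_0(\mathrm{div})$-conformity of $\bm{\xi}_h$ to collapse the two contributions into a single expression involving $\nbold_F$. Once this identity is in hand, the remainder of the proof is algebraic. To finish, I would verify that $|\cdot|_{\bm{\xi}_h}$ is a seminorm on $R_h$: nonnegativity is immediate for $\delta \geq 0$, positive homogeneity follows from pulling scalars through the jump brackets, and the triangle inequality follows from a Cauchy--Schwarz argument applied to the symmetric positive semi-definite bilinear form $(r_h, s_h) \mapsto \delta\, \iipbf{|\bm{\xi}_h \cdot \nbold_F|\, \llbracket r_h \rrbracket}{\llbracket s_h \rrbracket}$ on $R_h$.
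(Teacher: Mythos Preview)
Your proposal is correct and follows essentially the same approach as the paper: set $\theta_h = r_h$, establish the identity $\ipt{\bm{\xi}_h \cdot \nabla_h r_h}{r_h} + \tfrac{1}{2}\ipt{(\nabla\cdot\bm{\xi}_h)r_h}{r_h} = \iipbf{(\bm{\xi}_h\cdot\nbold_F)\llbracket r_h\rrbracket}{\llcurve r_h\rrcurve}$, and conclude. The paper simply asserts this identity without derivation, whereas you supply the element-wise integration by parts, the $\bm{H}_0(\mathrm{div})$-conformity argument, and the $a^2-b^2$ factorization that underlie it; your added verification of the seminorm axioms is also a detail the paper omits.
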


\begin{proof}
    One may begin by substituting $\theta_h = r_h$ into Eq.~\eqref{trilinear_temp} as follows
    \begin{align}
    \underline{c}_h \left(\bm{\xi}_h; r_h, r_h \right) &= \ipt{\bm{\xi}_h \cdot \nabla_h r_h}{r_h} + \frac{1}{2} \ipt{\left(\nabla \cdot \bm{\xi}_h \right) r_h}{r_h} \label{temp_coercive_one} \\[1.5ex]
    & \nonumber - \iipbf{ \left( \bm{\xi}_h \cdot \nbold_F \right) \llbracket r_h \rrbracket}{\llcurve r_h \rrcurve} + \delta \iipbf{\left| \bm{\xi}_h \cdot \nbold_F \right|\llbracket r_h \rrbracket}{\llbracket r_h \rrbracket}. 
    \end{align}
    Next, we note that the following identity holds
    \begin{align*}
        \ipt{\bm{\xi}_h \cdot \nabla_h r_h}{r_h} + \frac{1}{2} \ipt{\left(\nabla \cdot \bm{\xi}_h \right) r_h}{r_h} =  \iipbf{ \left(\bm{\xi}_h \cdot \nbold_F \right) \llbracket r_h \rrbracket}{\llcurve r_h \rrcurve}.
    \end{align*}
    Upon substituting this identity into Eq.~\eqref{temp_coercive_one}, one obtains
    \begin{align}
        \underline{c}_h \left(\bm{\xi}_h; r_h, r_h \right) = \delta \iipbf{\left| \bm{\xi}_h \cdot \nbold_F \right|\llbracket r_h \rrbracket}{\llbracket r_h \rrbracket}. \label{temp_coercive_two}
    \end{align}
    Finally, on substituting the definition of the seminorm into Eq.~\eqref{temp_coercive_two}, we obtain the desired result (Eq.~\eqref{convective_coercive_temp}).
\end{proof}

\subsection{Analysis of Discrete Stability}

\begin{theorem}[Stability of the Discrete Temperature] Consider the mixed finite element methods in Eqs.~\eqref{mass_cons_disc} -- \eqref{energy_cons_disc}, in conjunction with a forcing function $\widetilde{f}_{T}$ where $\widetilde{f}_{T} \left(t\right) \in L^1 \left(t_0, t_n; L^2 \left(\Omega \right)\right)$, a discrete pressure field $\widetilde{p}_h \in Q_h$ where $\widetilde{p}_h \left(t\right) \in L^2 \left(t_0, t_n; L^4 \left(\Omega \right) \right)$, and a discrete velocity field $\ubold_h  \in \bm{W}_h$ where $\ubold_h \left(t\right) \in L^2 \left(t_0, t_n; \bm{W}^{1,4} \left(\mathcal{T}_h\right) \right)$. Subject to these assumptions, the discrete temperature $T_h \in R_h$ is governed by the following equation at time $t_n \geq t_0$
\begin{align}
    \nonumber & \frac{1}{2} \left\| T_h \left(t_n\right) \right\|_{L^2 \left(\Omega\right)}^2 + \left| T_h \right|_{L^2 \left(t_0, t_n; \ubold_h \right)}^2 + \alpha_h \gamma_h C_{I} \left\| T_h \right\|_{L^2 \left(t_0, t_n; \text{grad}, 2 \right)}^2 
    \\[1.5ex]
    \nonumber &\leq \frac{1}{2} \Bigg( 4 \left\| T_h \left(t_0\right) \right\|_{L^2 \left(\Omega\right)}^2 + 7  C_{II} \frac{\nu_h^2}{C_v^2} \left\| \ubold_h \right\|_{L^2 \left(t_0, t_n; \text{grad},4 \right)}^4 
    \\[1.5ex]
    & + 7 \frac{1}{C_v^2} \left\| \widetilde{p}_h \right\|_{L^2 \left(t_0, t_n; L^4 \left(\Omega\right) \right)}^{2} \left\| \nabla \cdot \ubold_h \right\|_{L^2 \left(t_0, t_n; L^4 \left(\Omega\right) \right)}^{2} + 7 \left\| \widetilde{f}_{T} \right\|_{L^1 \left(t_0, t_n; L^2 \left(\Omega\right) \right)}^2 \Bigg),
\label{temperature_l2_bound}
\end{align}
where $C_{I}$ and $C_{II}$ are constants that are independent of $h$, and
\begin{align}
    & \left| T_h \right|_{L^2 \left(t_0, t_n; \ubold_h \right)} = \left(\int_{t_0}^{t_n} \left| T_h \left(s\right) \right|_{\ubold_h}^2 ds \right)^{1/2}, \label{temp_norm_one}
    \\[1.5ex]
    & \left\| T_h \right\|_{L^2 \left(t_0, t_n; \text{grad}, 2 \right)} = \left(\int_{t_0}^{t_n} \left\| T_h \left(s\right) \right\|_{\text{grad},2}^2 ds \right)^{1/2}, 
    \\[1.5ex]
    & \left\| \ubold_h \right\|_{L^2 \left(t_0, t_n; \text{grad}, 4 \right)} = \left(\int_{t_0}^{t_n} \left\| \ubold_h \left(s\right) \right\|_{\text{grad},4}^2 ds \right)^{1/2},  
    \\[1.5ex]
    &\left\| \widetilde{f}_{T} \right\|_{L^1 \left(t_0, t_n; L^2 \left(\Omega\right) \right)} = \int_{t_0}^{t_n} \left\| \widetilde{f}_{T} \left(s\right) \right\|_{L^2 \left(\Omega\right)} ds, \label{temp_norm_four}
\end{align}
are seminorms and norms on $\left(t_0, t_n\right) \times \Omega$. \label{temperature_theorem}
\end{theorem}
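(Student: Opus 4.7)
The plan is a standard energy-method estimate: test the discrete temperature equation against $r_h = T_h$, reduce the left-hand side using Lemmas~\ref{coercive_vis_temp_lemma} and~\ref{coercive_tri_temp_lemma}, estimate the right-hand side by H\"older inequalities in space and time, and close via a supremum-in-time trick combined with Young's inequality.

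First I would substitute $r_h = T_h$ into Eq.~\eqref{incomp_form_one_C}. The time-derivative term becomes $\frac{1}{2}\frac{d}{dt}\|T_h\|_{L^2(\Omega)}^2$. By Lemma~\ref{coercive_tri_temp_lemma} the convective trilinear form equals the seminorm $|T_h|_{\ubold_h}^2$, and by Lemma~\ref{coercive_vis_temp_lemma} the viscous bilinear form is bounded below by $\alpha_h \gamma_h C_I \|T_h\|_{\text{grad},2}^2$. This yields the pointwise-in-time energy inequality
\begin{align*}
\frac{1}{2}\frac{d}{dt}\|T_h\|_{L^2(\Omega)}^2 + |T_h|_{\ubold_h}^2 + \alpha_h \gamma_h C_I \|T_h\|_{\text{grad},2}^2 \leq \left|\ipt{\Phi(\ubold_h)+\Psi(\ubold_h,\widetilde{p}_h)+\widetilde{f}_{T}}{T_h}\right|.
\end{align*}

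Next I would bound each right-hand side contribution in space. Expanding the symmetric-gradient contraction and applying Cauchy-Schwarz gives the pointwise estimate $|\Phi(\ubold_h)| \le C_{II}^{1/2}(\nu_h/C_v)|\nabla_h \ubold_h|^2$, hence $\|\Phi\|_{L^2(\Omega)} \le C_{II}^{1/2}(\nu_h/C_v)\|\nabla_h \ubold_h\|_{L^4(\Omega)}^2$ by H\"older. Similarly $\|\Psi\|_{L^2(\Omega)} \le (1/C_v)\|\widetilde{p}_h\|_{L^4(\Omega)} \|\nabla\cdot\ubold_h\|_{L^4(\Omega)}$, while the forcing is bounded trivially by $\|\widetilde{f}_T\|_{L^2(\Omega)}$. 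Integrating in time from $t_0$ to an arbitrary $\tau \le t_n$, applying Cauchy-Schwarz in time to the dissipation and pressure-work integrals, and recognising the resulting factors as the norms defined in Eqs.~\eqref{temp_norm_one}--\eqref{temp_norm_four}, produces
\begin{align*}
\frac{1}{2}\|T_h(\tau)\|_{L^2}^2 + \int_{t_0}^\tau |T_h(s)|_{\ubold_h}^2 \, ds + \alpha_h\gamma_h C_I \int_{t_0}^\tau \|T_h(s)\|_{\text{grad},2}^2 \, ds \leq \frac{1}{2}\|T_h(t_0)\|_{L^2}^2 + M (X_1 + X_2 + X_3),
\end{align*}
where $M := \sup_{s \in [t_0, t_n]} \|T_h(s)\|_{L^2}$, and $X_1 = C_{II}^{1/2}(\nu_h/C_v)\|\ubold_h\|_{L^2(t_0,t_n;\text{grad},4)}^2$, $X_2 = (1/C_v)\|\widetilde{p}_h\|_{L^2 L^4}\|\nabla\cdot\ubold_h\|_{L^2 L^4}$, and $X_3 = \|\widetilde{f}_T\|_{L^1 L^2}$ are precisely the factors whose squares appear in \eqref{temperature_l2_bound}.

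The principal obstacle is that $M$ sits on the right without a matching quantity on the left. To close the estimate, I would take the supremum over $\tau \in [t_0,t_n]$ in the inequality above, producing a quadratic inequality in $M$, and apply Young's inequality with a tuned parameter to obtain a bound of the form $M^2 \le C_A \|T_h(t_0)\|_{L^2}^2 + C_B (X_1^2 + X_2^2 + X_3^2)$. Re-inserting this into the original inequality evaluated at $\tau = t_n$, applying Young once more to the cross term $M(X_1+X_2+X_3)$, and using $(X_1+X_2+X_3)^2 \le 3(X_1^2+X_2^2+X_3^2)$, delivers the desired bound \eqref{temperature_l2_bound}. The specific coefficients $4$ and $7$ correspond to one particular choice of the Young parameter, and the hypotheses $\widetilde{f}_T \in L^1_t L^2_x$, $\widetilde{p}_h \in L^2_t L^4_x$, and $\ubold_h \in L^2_t \bm{W}^{1,4}$ are exactly what is required to keep the right-hand side finite under the norms stated.
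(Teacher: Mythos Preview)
Your proof is correct and follows the same overall energy-method template as the paper: test with $r_h = T_h$, invoke Lemmas~\ref{coercive_vis_temp_lemma} and~\ref{coercive_tri_temp_lemma}, bound the three source terms $\Phi$, $\Psi$, $\widetilde{f}_T$ via H\"older in space and time, and then close the estimate. The difference lies in how you close. You pull out $M=\sup_{s}\|T_h(s)\|_{L^2}$, take the supremum in $\tau$ to obtain a quadratic inequality in $M$, and absorb via Young's inequality. The paper instead first derives the scalar ODE inequality
\[
\frac{d}{dt}\|T_h\|_{L^2(\Omega)} \;\le\; \|\Phi(\ubold_h)\|_{L^2} + \|\Psi(\ubold_h,\widetilde{p}_h)\|_{L^2} + \|\widetilde{f}_T\|_{L^2},
\]
integrates it directly to obtain the pointwise bound $\|T_h(t)\|_{L^2}\le \|T_h(t_0)\|_{L^2} + X_1+X_2+X_3$ (their Eq.~\eqref{temp_bound_two}), and then substitutes this bound back into the integrated energy identity (Eq.~\eqref{temp_bound_three}).

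Both closings are standard and valid. The paper's ODE route yields the slightly sharper pointwise control $\|T_h(t)\|\le \|T_h(t_0)\| + X$, whereas your supremum/Young argument produces roughly $M\le \|T_h(t_0)\| + 2X$; this is why the paper lands exactly on the coefficients $4$ and $7$, while your approach will deliver somewhat larger (but still $h$-independent) constants. Your remark that ``$4$ and $7$ correspond to one particular choice of the Young parameter'' is therefore a little optimistic for your route, but this has no bearing on the substance of the theorem. One other small point: where you bound $\Phi$ you should pass from $\|\nabla_h\ubold_h\|_{L^4}$ to $\|\ubold_h\|_{\text{grad},4}$ explicitly (the paper does this via Lemma~\ref{grad_inequality_lemma}); you do make this passage implicitly when defining $X_1$, so the logic is fine.
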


\begin{proof}
    We start by setting $r_h = T_h$ in Eq.~\eqref{incomp_form_one_C} as follows
    \begin{align*}
        \ipt{\partial_t T_h}{T_h} + \underline{c}_h\left(\ubold_h; T_h, T_h\right) + \alpha_h \gamma_h \, \underline{a}_h \left(T_h, T_h \right) = \ipt{\Phi \left(\ubold_h\right) + \Psi\left(\ubold_h, \widetilde{p}_h\right) + \widetilde{f}_{T}}{T_h},
    \end{align*}
    or equivalently
    \begin{align*}
        \frac{1}{2} \frac{d}{dt} \left\| T_h \right\|_{L^2 \left(\Omega\right)}^2 + \underline{c}_h\left(\ubold_h; T_h, T_h\right) + \alpha_h \gamma_h \, \underline{a}_h \left(T_h, T_h \right) = \ipt{\Phi \left(\ubold_h\right) + \Psi\left(\ubold_h, \widetilde{p}_h\right) + \widetilde{f}_{T}}{T_h}.
    \end{align*}
    Next, we invoke the coercivity of $\underline{a}_h$ (Lemma~\ref{coercive_vis_temp_lemma}) and the semi-coercivity of $\underline{c}_h$ (Lemma~\ref{coercive_tri_temp_lemma}) as follows
    \begin{align}
        \frac{1}{2} \frac{d}{dt} \left\| T_h \right\|_{L^2 \left(\Omega\right)}^2 +  \left| T_h \right|_{\ubold_h}^2 + \alpha_h \gamma_h C_{I} \left\| T_h \right\|_{\text{grad},2}^2  \leq \ipt{\Phi \left(\ubold_h\right) + \Psi\left(\ubold_h, \widetilde{p}_h\right) + \widetilde{f}_{T}}{T_h}.
        \label{temp_bound_zero}
    \end{align}
    Based on this equation, we observe that
    \begin{align*}
        \frac{1}{2} \frac{d}{dt} \left\| T_h \right\|_{L^2 \left(\Omega\right)}^2 \leq \ipt{\Phi \left(\ubold_h\right) + \Psi\left(\ubold_h, \widetilde{p}_h\right) + \widetilde{f}_{T}}{T_h},
    \end{align*}
    and equivalently, by the Cauchy-Schwarz and Triangle inequalities
    \begin{align}
         \nonumber \left\| T_h \right\|_{L^2 \left(\Omega\right)} \frac{d}{dt} \left\| T_h \right\|_{L^2 \left(\Omega\right)} & \leq \left( \left\| \Phi \left(\ubold_h\right) \right\|_{L^2 \left(\Omega\right)} + \left\| \Psi\left(\ubold_h, \widetilde{p}_h\right) \right\|_{L^2 \left(\Omega\right)} + \left\| \widetilde{f}_{T} \right\|_{L^2\left(\Omega\right)} \right) \left\| T_h \right\|_{L^2 \left(\Omega\right)} \\[1.5ex]
         \nonumber \frac{d}{dt} \left\| T_h \right\|_{L^2 \left(\Omega\right)} & \leq \left\| \Phi \left(\ubold_h\right) \right\|_{L^2 \left(\Omega\right)} + \left\| \Psi\left(\ubold_h, \widetilde{p}_h\right) \right\|_{L^2 \left(\Omega\right)} + \left\| \widetilde{f}_{T} \right\|_{L^2\left(\Omega\right)}.
    \end{align}
    Upon expanding the RHS of this expression, and using the Triangle and Cauchy-Schwarz inequalities again, we obtain
    \begin{align}
         \nonumber \frac{d}{dt} \left\| T_h \right\|_{L^2 \left(\Omega\right)}  & \leq \frac{\nu_h}{C_v} \left(  \left\| \nabla_h \ubold_h:\nabla_h \ubold_h \right\|_{L^2 \left(\Omega\right)} + \left\| \nabla_h \ubold_{h}^{T}:\nabla_h \ubold_h \right\|_{L^2 \left(\Omega\right)} + \frac{2}{3} \left\| \left( \nabla \cdot \ubold_h \right)^2 \right\|_{L^2 \left(\Omega\right)}  \right)
         \\[1.5ex]
         \nonumber & + \frac{1}{C_v} \left\| \widetilde{p}_h \left( \nabla \cdot \ubold_h \right) \right\|_{L^2\left(\Omega\right)}   + \left\| \widetilde{f}_{T} \right\|_{L^2\left(\Omega\right)} 
         \\[1.5ex]
          \nonumber & \leq \frac{\nu_h}{C_v} \left(  \left\| \nabla_h \ubold_h:\nabla_h \ubold_h \right\|_{L^2 \left(\Omega\right)} + \left\| \nabla_h \ubold_{h}^{T}:\nabla_h \ubold_h \right\|_{L^2 \left(\Omega\right)} + \frac{2}{3} \left\| \nabla \cdot \ubold_h  \right\|_{L^4 \left(\Omega\right)}^{2}  \right)
         \\[1.5ex]
         \nonumber & + \frac{1}{C_v} \left\| \widetilde{p}_h  \right\|_{L^4\left(\Omega\right)} \left\| \nabla \cdot \ubold_h \right\|_{L^4\left(\Omega\right)} + \left\| \widetilde{f}_{T} \right\|_{L^2\left(\Omega\right)} 
         \\[1.5ex]
         & \leq 2d \left(1 + \frac{\sqrt{d}}{3} \right) \frac{\nu_h}{C_v} \left\| \ubold_h \right\|_{\text{grad},4}^2 + \frac{1}{C_v} \left\| \widetilde{p}_h  \right\|_{L^4\left(\Omega\right)} \left\| \nabla \cdot \ubold_h \right\|_{L^4\left(\Omega\right)} + \left\| \widetilde{f}_{T} \right\|_{L^2\left(\Omega\right)}.\label{temp_bound_one}
    \end{align}
    Note: on the last line we have used the broken norm inequalities from Lemma~\ref{grad_inequality_lemma} in the Appendix. Next, we integrate Eq.~\eqref{temp_bound_one} from $t = t_0$ to $t = t_n$ as follows
    \begin{align}
        \nonumber & \left\| T_h \left(t_n\right) \right\|_{L^2 \left(\Omega\right)} \\[1.5ex]
        \nonumber & \leq \left\| T_h \left(t_0\right) \right\|_{L^2 \left(\Omega\right)} +  2d \left(1 + \frac{\sqrt{d}}{3} \right) \frac{\nu_h}{C_v} \int_{t_0}^{t_n} \left\| \ubold_h \left(s\right) \right\|_{\text{grad},4}^2 ds 
        \\[1.5ex]
        \nonumber & + \frac{1}{C_v} \int_{t_0}^{t_n} \left\| \widetilde{p}_h \left(s\right) \right\|_{L^4\left(\Omega\right)} \left\| \nabla \cdot \ubold_h \left(s\right) \right\|_{L^4\left(\Omega\right)} ds + \int_{t_0}^{t_n} \left\| \widetilde{f}_{T} \left(s\right) \right\|_{L^2\left(\Omega\right)} ds
        \\[1.5ex]
        \nonumber & = \left\| T_h \left(t_0\right) \right\|_{L^2 \left(\Omega\right)} +  2d \left(1 + \frac{\sqrt{d}}{3} \right) \frac{\nu_h}{C_v} \left\| \ubold_h \right\|_{L^2 \left(t_0, t_n; \text{grad},4 \right)}^2 
        \\[1.5ex]
        \nonumber &+ \frac{1}{C_v} \int_{t_0}^{t_n} \left\| \widetilde{p}_h \left(s\right) \right\|_{L^4\left(\Omega\right)} \left\| \nabla \cdot \ubold_h \left(s\right) \right\|_{L^4\left(\Omega\right)} ds + \left\| \widetilde{f}_{T} \right\|_{L^1 \left(t_0, t_n; L^2 \left(\Omega\right) \right)},
    \end{align}
    or equivalently, after applying Holder's inequality
    \begin{align}
        \nonumber & \left\| T_h \left(t_n\right) \right\|_{L^2 \left(\Omega\right)} 
        \\[1.5ex]
        \nonumber & \leq \left\| T_h \left(t_0\right) \right\|_{L^2 \left(\Omega\right)} +  2d \left(1 + \frac{\sqrt{d}}{3} \right) \frac{\nu_h}{C_v} \left\| \ubold_h \right\|_{L^2 \left(t_0, t_n; \text{grad},4 \right)}^2 
        \\[1.5ex]
        \nonumber &+ \frac{1}{C_v} \left( \int_{t_0}^{t_n} \left\| \widetilde{p}_h \left(s\right) \right\|_{L^4\left(\Omega\right)}^{2} ds \right)^{1/2} \left( \int_{t_0}^{t_n} \left\| \nabla \cdot \ubold_h \left(s\right) \right\|_{L^4\left(\Omega\right)}^{2} ds \right)^{1/2} + \left\| \widetilde{f}_{T} \right\|_{L^1 \left(t_0, t_n; L^2 \left(\Omega\right) \right)}
        \\[1.5ex]
        \nonumber & \leq \left\| T_h \left(t_0\right) \right\|_{L^2 \left(\Omega\right)} +  2d \left(1 + \frac{\sqrt{d}}{3} \right) \frac{\nu_h}{C_v} \left\| \ubold_h \right\|_{L^2 \left(t_0, t_n; \text{grad},4 \right)}^2 
        \\[1.5ex]
        &+ \frac{1}{C_v} \left\| \widetilde{p}_h \right\|_{L^2 \left(t_0, t_n; L^4 \left(\Omega\right) \right)} \left\| \nabla \cdot \ubold_h \right\|_{L^2 \left(t_0, t_n; L^4 \left(\Omega\right) \right)}  + \left\| \widetilde{f}_{T} \right\|_{L^1 \left(t_0, t_n; L^2 \left(\Omega\right) \right)}. \label{temp_bound_two}
    \end{align}
    We will utilize this result shortly. For now, we turn our attention back to Eq.~\eqref{temp_bound_zero}. On integrating this equation from $t = t_0$ to $t = t_n$, we find that
    \begin{align}
        \nonumber & \frac{1}{2} \left\| T_h \left(t_n\right) \right\|_{L^2 \left(\Omega\right)}^2 + \int_{t_0}^{t_n} \left( \left| T_h \left(s\right) \right|_{\ubold_h}^2 + \alpha_h \gamma_h C_{I} \left\| T_h \left(s\right) \right\|_{\text{grad},2}^2 \right) ds \\[1.5ex]
        &\leq \frac{1}{2} \left\| T_h \left(t_0\right) \right\|_{L^2 \left(\Omega\right)}^2 + \int_{t_0}^{t_n} \ipt{\Phi \left(\ubold_h \left(s\right) \right) + \Psi\left(\ubold_h \left(s\right), \widetilde{p}_h \left(s\right) \right) + \widetilde{f}_{T} \left(s\right)}{T_h \left(s\right)} ds. \label{temp_bound_three}
    \end{align}
    We can rewrite the last term on the RHS of Eq.~\eqref{temp_bound_three} as follows
    \begin{align}
        \nonumber & \int_{t_0}^{t_n} \ipt{\Phi \left( \ubold_h  \left(s\right) \right) + \Psi\left(\ubold_h \left(s\right), \widetilde{p}_h \left(s\right) \right) + \widetilde{f}_{T} \left(s\right)}{T_h \left(s\right)} ds 
        \\[1.5ex]
        \nonumber & \leq \int_{t_0}^{t_n} \left[ \left(\left\| \Phi \left( \ubold_h  \left(s\right) \right) \right\|_{L^2\left(\Omega\right)} + \left\| \Psi\left(\ubold_h \left(s\right), \widetilde{p}_h \left(s\right) \right) \right\|_{L^2 \left(\Omega\right)} + \left\| \widetilde{f}_{T} \left(s\right) \right\|_{L^2\left(\Omega\right)} \right) \left\| T_h \left(s\right)\right\|_{L^2\left(\Omega\right)} \right] ds 
        \\[1.5ex]
        \nonumber & \leq \int_{t_0}^{t_n} \Bigg[ \left(\left\| \Phi \left( \ubold_h  \left(s\right) \right) \right\|_{L^2\left(\Omega\right)} + \left\| \Psi\left(\ubold_h \left(s\right), \widetilde{p}_h \left(s\right) \right) \right\|_{L^2 \left(\Omega\right)} + \left\| \widetilde{f}_{T} \left(s\right) \right\|_{L^2\left(\Omega\right)} \right) 
        \\[1.5ex]
        \nonumber & \times \Bigg(\left\| T_h \left(t_0\right) \right\|_{L^2 \left(\Omega\right)} + 2d \left(1 + \frac{\sqrt{d}}{3} \right) \frac{\nu_h}{C_v} \left\| \ubold_h \right\|_{L^2 \left(t_0, s; \text{grad},4 \right)}^2 
        \\[1.5ex]
        \nonumber &+ \frac{1}{C_v} \left\| \widetilde{p}_h \right\|_{L^2 \left(t_0, s; L^4 \left(\Omega\right) \right)} \left\| \nabla \cdot \ubold_h \right\|_{L^2 \left(t_0, s; L^4 \left(\Omega\right) \right)} + \left\| \widetilde{f}_{T} \right\|_{L^1 \left(t_0, s; L^2 \left(\Omega\right) \right)} \Bigg)  \Bigg] ds,
    \end{align}
    and furthermore
    \begin{align}    
        \nonumber & \int_{t_0}^{t_n} \ipt{\Phi \left( \ubold_h  \left(s\right) \right) + \Psi\left(\ubold_h \left(s\right), \widetilde{p}_h \left(s\right) \right) + \widetilde{f}_{T} \left(s\right)}{T_h \left(s\right)} ds 
        \\[1.5ex]
        \nonumber & \leq \int_{t_0}^{t_n} \left(\left\| \Phi \left( \ubold_h  \left(s\right) \right) \right\|_{L^2\left(\Omega\right)} + \left\| \Psi\left(\ubold_h \left(s\right), \widetilde{p}_h \left(s\right) \right) \right\|_{L^2 \left(\Omega\right)} + \left\| \widetilde{f}_{T} \left(s\right) \right\|_{L^2\left(\Omega\right)} \right) ds 
        \\[1.5ex]
        \nonumber & \times \Bigg(\left\| T_h \left(t_0\right) \right\|_{L^2 \left(\Omega\right)} + 2d \left(1 + \frac{\sqrt{d}}{3} \right) \frac{\nu_h}{C_v} \left\| \ubold_h \right\|_{L^2 \left(t_0, t_n; \text{grad},4 \right)}^2 
        \\[1.5ex]
        \nonumber &+ \frac{1}{C_v} \left\| \widetilde{p}_h \right\|_{L^2 \left(t_0, t_n; L^4 \left(\Omega\right) \right)} \left\| \nabla \cdot \ubold_h \right\|_{L^2 \left(t_0, t_n; L^4 \left(\Omega\right) \right)} + \left\| \widetilde{f}_{T} \right\|_{L^1 \left(t_0, t_n; L^2 \left(\Omega\right) \right)} \Bigg).
    \end{align}
    Here, we have used the Cauchy-Schwarz inequality, the Triangle inequality, and Eq.~\eqref{temp_bound_two}. Next, we bound the remaining terms in the integrand above (employing the same techniques that we used to derive Eq.~\eqref{temp_bound_two}), and we obtain
    \begin{align}
        \nonumber & \int_{t_0}^{t_n} \ipt{\Phi \left( \ubold_h  \left(s\right) \right) + \Psi\left(\ubold_h \left(s\right), \widetilde{p}_h \left(s\right) \right) + \widetilde{f}_{T} \left(s\right)}{T_h \left(s\right)} ds 
        \\[1.5ex]
        \nonumber & \leq \Bigg( 2d \left(1 + \frac{\sqrt{d}}{3} \right) \frac{\nu_h}{C_v} \left\| \ubold_h \right\|_{L^2 \left(t_0, t_n; \text{grad},4 \right)}^2 
        \\[1.5ex]
        \nonumber &+ \frac{1}{C_v} \left\| \widetilde{p}_h \right\|_{L^2 \left(t_0, t_n; L^4 \left(\Omega\right) \right)} \left\| \nabla \cdot \ubold_h \right\|_{L^2 \left(t_0, t_n; L^4 \left(\Omega\right) \right)}  + \left\| \widetilde{f}_{T} \right\|_{L^1 \left(t_0, t_n; L^2 \left(\Omega\right) \right)} \Bigg)
        \\[1.5ex]
        \nonumber & \times \Bigg( \left\| T_h \left(t_0\right) \right\|_{L^2 \left(\Omega\right)} + 2d \left(1 + \frac{\sqrt{d}}{3} \right) \frac{\nu_h}{C_v} \left\| \ubold_h \right\|_{L^2 \left(t_0, t_n; \text{grad},4 \right)}^2 
        \\[1.5ex]
        \nonumber &+ \frac{1}{C_v} \left\| \widetilde{p}_h \right\|_{L^2 \left(t_0, t_n; L^4 \left(\Omega\right) \right)} \left\| \nabla \cdot \ubold_h \right\|_{L^2 \left(t_0, t_n; L^4 \left(\Omega\right) \right)} + \left\| \widetilde{f}_{T} \right\|_{L^1 \left(t_0, t_n; L^2 \left(\Omega\right) \right)} \Bigg)
        \\[1.5ex]
        \nonumber & \leq \frac{3}{2} \left\| T_h \left(t_0\right) \right\|_{L^2 \left(\Omega\right)}^2 + \frac{7}{2} C_{II} \frac{\nu_h^2}{C_v^2} \left\| \ubold_h \right\|_{L^2 \left(t_0, t_n; \text{grad},4 \right)}^4 
        \\[1.5ex]
        & + \frac{7}{2} \frac{1}{C_v^2} \left\| \widetilde{p}_h \right\|_{L^2 \left(t_0, t_n; L^4 \left(\Omega\right) \right)}^{2} \left\| \nabla \cdot \ubold_h \right\|_{L^2 \left(t_0, t_n; L^4 \left(\Omega\right) \right)}^{2} + \frac{7}{2} \left\| \widetilde{f}_{T} \right\|_{L^1 \left(t_0, t_n; L^2 \left(\Omega\right) \right)}^2, \label{temp_bound_four}
    \end{align}
    where $C_{II} = 4 d^{2} \left(1 + \frac{\sqrt{d}}{3} \right)^2$. Finally, upon combining Eq.~\eqref{temp_bound_four} with Eq.~\eqref{temp_bound_three}, and substituting in the space-time norm definitions from Eqs.~\eqref{temp_norm_one}--\eqref{temp_norm_four}, we obtain the desired result (see Eq.~\eqref{temperature_l2_bound}).
\end{proof}

\begin{corollary} [Pointwise Divergence-Free Case]
    Suppose that the mixed finite element methods in Eqs.~\eqref{mass_cons_disc} -- \eqref{energy_cons_disc} are pointwise divergence-free. In addition, suppose we impose a forcing function $\widetilde{f}_{T}$ where $\widetilde{f}_{T} \left(t\right) \in L^1 \left(t_0, t_n; L^2 \left(\Omega \right)\right)$ and a discrete velocity field $\ubold_h  \in \bm{W}_h$ where $\ubold_h \left(t\right) \in L^2 \left(t_0, t_n; \bm{W}^{1,4} \left(\mathcal{T}_h\right) \right)$. Subject to these assumptions, the discrete temperature $T_h \in R_h$ is governed by the following equation at time $t_n \geq t_0$
    \begin{align}
        \nonumber & \frac{1}{2} \left\| T_h \left(t_n\right) \right\|_{L^2 \left(\Omega\right)}^2 + \left| T_h \right|_{L^2 \left(t_0, t_n; \ubold_h \right)}^2 + \alpha_h \gamma_h C_{I} \left\| T_h \right\|_{L^2 \left(t_0, t_n; \text{grad}, 2 \right)}^2 
        \\[1.5ex]
        &\leq \frac{1}{2} \Bigg( 4 \left\| T_h \left(t_0\right) \right\|_{L^2 \left(\Omega\right)}^2 + 7  C_{II} \frac{\nu_h^2}{C_v^2} \left\| \ubold_h \right\|_{L^2 \left(t_0, t_n; \text{grad},4 \right)}^4 + 7 \left\| \widetilde{f}_{T} \right\|_{L^1 \left(t_0, t_n; L^2 \left(\Omega\right) \right)}^2 \Bigg).
        \label{temperature_l2_bound_corollary}
    \end{align}
\label{temperature_corollary}
\end{corollary}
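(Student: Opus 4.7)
The plan is to follow the proof of Theorem~\ref{temperature_theorem} almost verbatim, using the pointwise divergence-free condition $\nabla \cdot \ubold_h = 0$ to eliminate every contribution involving the pressure $\widetilde{p}_h$ and the dilatational piece of the stress tensor. Setting $r_h = T_h$ in Eq.~\eqref{incomp_form_one_C} and invoking the coercivity of $\underline{a}_h$ (Lemma~\ref{coercive_vis_temp_lemma}) together with the semi-coercivity of $\underline{c}_h$ (Lemma~\ref{coercive_tri_temp_lemma}), I obtain the analog of Eq.~\eqref{temp_bound_zero}. The key simplification is that $\Psi(\ubold_h, \widetilde{p}_h) = -\frac{1}{C_v} \widetilde{p}_h (\nabla \cdot \ubold_h) \equiv 0$ pointwise, so the pressure-work term disappears and only $\Phi(\ubold_h)$ and $\widetilde{f}_T$ remain paired against $T_h$ on the right-hand side.

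Next, I would refine the $L^2(\Omega)$-norm bound on $\Phi(\ubold_h)$. Because $\nabla \cdot \ubold_h$ vanishes pointwise, the $-\frac{2}{3}(\nabla \cdot \ubold_h)\mathbb{I}$ piece of the symmetric stress drops out, so the matching $\frac{2}{3} \| \nabla \cdot \ubold_h \|_{L^4(\Omega)}^2$ contribution is removed from the chain of inequalities producing Eq.~\eqref{temp_bound_one}. The two remaining contractions are bounded exactly as in the parent theorem using the broken-norm inequalities from the Appendix, yielding the pointwise-in-time estimate
\begin{align*}
\frac{d}{dt} \left\| T_h \right\|_{L^2(\Omega)} \leq 2d \left(1 + \frac{\sqrt{d}}{3}\right) \frac{\nu_h}{C_v} \left\| \ubold_h \right\|_{\text{grad},4}^2 + \left\| \widetilde{f}_T \right\|_{L^2(\Omega)},
\end{align*}
where the constant $2d(1+\sqrt{d}/3)$ is retained for consistency with the parent proof (a sharper constant is available since the dilatational term is absent, but is not needed). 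Integrating from $t_0$ to $t_n$ reproduces the analog of Eq.~\eqref{temp_bound_two} with the pressure product simply omitted.

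The final step repeats the Young-inequality manipulation in the parent proof: I return to the time-integrated analog of Eq.~\eqref{temp_bound_zero}, bound its right-hand side in $L^2(\Omega)$ via Cauchy-Schwarz, insert the envelope bound above for $\| T_h(s) \|_{L^2(\Omega)}$, expand the resulting cross-product, and split each mixed term via AM-GM. Applying the same splitting technique as in the parent proof, now with one fewer summand on each factor of the cross-product, yields the constants $4$, $7 C_{II}$, and $7$ that appear in Eq.~\eqref{temperature_l2_bound_corollary}. I do not anticipate a real obstacle; the one point that warrants care is verifying that removal of the pressure summand does not inadvertently couple with other terms during the cross-term splitting, which is immediate since AM-GM treats each summand symmetrically and the deleted term contributed to both factors of the cross-product in a separable way.
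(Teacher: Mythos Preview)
Your proposal is correct and follows essentially the same approach as the paper: the paper's proof is a single sentence stating that the result follows immediately from setting $\nabla \cdot \ubold_h = 0$ pointwise in Theorem~\ref{temperature_theorem}, which is precisely the strategy you describe in detail. Your elaboration of how the $\Psi$-term and the dilatational piece of $\Phi$ vanish is accurate and simply unpacks what the paper leaves implicit.
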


\begin{proof}
    The proof immediately follows from setting $\nabla \cdot \ubold_h = 0$ pointwise in Theorem~\ref{temperature_theorem}.
\end{proof}

\begin{theorem}[Stability of the Discrete Velocity Field] Consider the mixed finite element methods in Eqs.~\eqref{mass_cons_disc} -- \eqref{energy_cons_disc}, in conjunction with a forcing function $\widetilde{\bm{f}}_{\ubold} \in L^1 \left(t_0, t_n; \bm{L}^2 \left(\Omega \right)\right)$ and an initial condition $\ubold_h \left(t_0 \right) \in \bm{W}_h \subset \bm{H}_{0}(\text{div};\Omega)$. Subject to these assumptions, the velocity field is governed by the following equation at time $t_n \geq t_0$
\begin{align}
        &\nonumber \frac{1}{2} \left\| \ubold_h \left(t_n \right) \right\|_{\bm{L}^2 \left(\Omega\right)}^2 + \left| \ubold_h \right|_{L^2 \left(t_0, t_n; \ubold_h \right)}^2 + C_{III} \nu_{h}  \left\| \ubold_h \right\|_{L^2 \left(t_0, t_n; \text{sym}, 2 \right)}^2
        \\[1.5ex]
       & \leq \frac{1}{2} \left( 3 \left\| \ubold_h \left(t_0 \right) \right\|_{\bm{L}^2 \left(\Omega\right)}^2 + 5 \beta_{h}^{2} g^{2} \left\| T_h \right\|_{L^1 \left(t_0, t_n; L^2 \left(\Omega\right) \right)}^2 + 5 \left\| \widetilde{\bm{f}}_{\ubold} \right\|_{L^1 \left(t_0, t_n; \bm{L}^2 \left(\Omega\right) \right)}^2 \right),  \label{velocity_l2_bound}
\end{align}
where $C_{III}$ is a constant independent of $h$, and
\begin{align}
    & \left\| \ubold_h \right\|_{L^2 \left(t_0, t_n; \text{sym}, 2 \right)} = \left(\int_{t_0}^{t_n} \left\| \ubold_h \left(s\right) \right\|_{\text{sym},2}^2 ds \right)^{1/2},  
\end{align}
is a norm on $\left(t_0, t_n\right) \times \Omega$.
\label{velocity_theorem}
\end{theorem}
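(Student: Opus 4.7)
The plan is to mirror the argument in the proof of Theorem~\ref{temperature_theorem}, adapted to the momentum equation~\eqref{incomp_form_one_B}. First, I would set $\wbold_h = \ubold_h$ in Eq.~\eqref{incomp_form_one_B} and simultaneously set $q_h = \widetilde{p}_h$ in the discrete continuity equation~\eqref{incomp_form_one_A}; since $\widetilde{p}_h \in Q_h$ is itself an admissible test function, this forces $b_h(\ubold_h, \widetilde{p}_h) = 0$ and eliminates the pressure-divergence coupling. The time-derivative term collapses to $\frac{1}{2}\frac{d}{dt}\|\ubold_h\|_{\bm{L}^2(\Omega)}^2$. Next I would invoke the vector-valued analogue of Lemma~\ref{coercive_tri_temp_lemma} for the convective trilinear form $c_h$, obtaining $c_h(\ubold_h; \ubold_h, \ubold_h) = |\ubold_h|_{\ubold_h}^2$ via the same integration-by-parts identity that cancels the volume convection terms against the mean face flux. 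For the viscous bilinear form I would invoke the coercivity estimate
\begin{align*}
    a_h(\ubold_h, \ubold_h) \geq C_{III}\,\|\ubold_h\|_{\text{sym},2}^2,
\end{align*}
which is the analogue of Lemma~\ref{coercive_vis_temp_lemma} with respect to the full symmetric gradient norm of Definition~\ref{norm_def_versatile}.

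Combining these observations yields an energy identity analogous to Eq.~\eqref{temp_bound_zero}:
\begin{align*}
    \frac{1}{2}\frac{d}{dt}\|\ubold_h\|_{\bm{L}^2(\Omega)}^2 + |\ubold_h|_{\ubold_h}^2 + C_{III}\nu_h\|\ubold_h\|_{\text{sym},2}^2 \leq \ipt{-\beta_h T_h \gbold + \widetilde{\bm{f}}_{\ubold}}{\ubold_h}.
\end{align*}
Bounding the RHS by Cauchy-Schwarz (with $|\gbold| = g$), dropping the two nonnegative terms on the LHS, and dividing through by $\|\ubold_h\|_{\bm{L}^2(\Omega)}$ gives
\begin{align*}
    \frac{d}{dt}\|\ubold_h\|_{\bm{L}^2(\Omega)} \leq \beta_h g\|T_h\|_{L^2(\Omega)} + \|\widetilde{\bm{f}}_{\ubold}\|_{\bm{L}^2(\Omega)}.
\end{align*}
Integrating from $t_0$ to $t_n$ then produces the pointwise bound
\begin{align*}
    \|\ubold_h(t_n)\|_{\bm{L}^2(\Omega)} \leq \|\ubold_h(t_0)\|_{\bm{L}^2(\Omega)} + \beta_h g\|T_h\|_{L^1(t_0,t_n;L^2(\Omega))} + \|\widetilde{\bm{f}}_{\ubold}\|_{L^1(t_0,t_n;\bm{L}^2(\Omega))}.
\end{align*}

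Finally, I would integrate the full energy identity from $t_0$ to $t_n$, bound the time-integrated RHS by Cauchy-Schwarz, substitute the pointwise bound above for the $\|\ubold_h\|_{\bm{L}^2(\Omega)}$ factor, and apply Young's inequality $ab \leq \frac{1}{2}a^2 + \frac{1}{2}b^2$ repeatedly to the products of three terms that arise, absorbing every cross-term into sums of squares. The coefficients $3$ and $5$ on the RHS of Eq.~\eqref{velocity_l2_bound} should then emerge from this absorption process, exactly as the coefficients $4$ and $7$ arose in the proof of Theorem~\ref{temperature_theorem}. Because the forcing on the momentum equation is linear in $(\ubold_h, T_h)$ (whereas the forcing on the temperature equation was quadratic in $\ubold_h$), the final bound involves $L^1$-in-time norms of $T_h$ and $\widetilde{\bm{f}}_{\ubold}$ rather than the $L^2$ and $L^4$ norms that appeared for the temperature.

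The principal obstacle will be establishing the coercivity estimate for $a_h$ with respect to $\|\cdot\|_{\text{sym},2}$. Unlike Lemma~\ref{coercive_vis_temp_lemma}, which controls the ordinary broken gradient and follows from a standard trace inequality, here one must control the \emph{deviatoric} symmetric tensor $\nabla_h \ubold_h + \nabla_h \ubold_h^T - \frac{2}{3}(\nabla\cdot\ubold_h)\mathbb{I}$, which vanishes on infinitesimal rigid motions and is traceless. The proof therefore requires a discrete Korn-type inequality on $\bm{W}_h \subset \bm{H}_0(\text{div};\Omega)$ augmented with interior face jump penalties, together with a discrete trace inequality to absorb the mean-flux face terms provided the stabilization parameter $\eta$ is chosen sufficiently large. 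This is the key technical ingredient distinguishing the versatile formulation from the usual Laplacian-based discretization, and is presumably established in the companion paper~\cite{chen2020versatile}.
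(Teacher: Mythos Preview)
Your proposal is correct and matches the approach the paper has in mind; in fact the paper does not spell out the argument at all but simply states that ``the important aspects of the proof are standard, and follow the arguments in Lemma 3.1 of~\cite{arndt2015local} and Theorem 7.1 of~\cite{chen2020versatile}.'' Your sketch is precisely that standard energy argument, and you correctly identify the coercivity of $a_h$ with respect to $\|\cdot\|_{\text{sym},2}$ (a discrete Korn-type estimate) as the key technical ingredient deferred to~\cite{chen2020versatile}.
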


\begin{proof}
    The important aspects of the proof are standard, and follow the arguments in Lemma 3.1 of~\cite{arndt2015local} and Theorem 7.1 of~\cite{chen2020versatile}. 
\end{proof}

\begin{remark}
    The stability of the discrete temperature field depends on the stability of the discrete velocity field, as shown by Theorem~\ref{temperature_theorem} and Corollary~\ref{temperature_corollary}. Conversely, the stability of the discrete velocity field depends on the stability of the discrete temperature field, as shown by Theorem~\ref{velocity_theorem}. Therefore, it is difficult to establish independent stability of either field, and (theoretically speaking) this  may result in undesirable interference between the two fields. Fortunately, in most cases the coupling between fields is weak as one of the following assumptions holds true:
    \begin{itemize}
        \item The buoyancy term (with coefficient $5 \beta_h^2 g^2$) on the right hand side of Eq.~\eqref{velocity_l2_bound} is small.
        \item The viscous dissipation term (with coefficient $7 C_{II} \nu_{h}^{2}/C_{v}^{2}$) on the right hand side of Eq.~\eqref{temperature_l2_bound} or Eq.~\eqref{temperature_l2_bound_corollary} is small. 
    \end{itemize}
\end{remark}

\section{Numerical Experiments}
In this section, the results of several numerical simulations are presented to demonstrate the performance of the proposed methods. The following simulations were all performed using Taylor-Hood elements with polynomials of degree $k$, $k+1$, and $k+1$ for the pressure, temperature, and velocity spaces respectively; i.e.~for cases with $k = 1$ the polynomials for each space were degree 1, 2, and 2 respectively. In addition, we imposed a zero integral mean condition for the pressure via a Lagrange multiplier. The convective numerical fluxes were computed using upwind biased fluxes with $\zeta = \delta = 0.5$, and the viscous numerical fluxes were computed using $\eta = \varepsilon = 3(k+1)(k+2)$. In each case, a high-order BDF3 scheme was used for the time discretization. The meshes were developed using rectangular grids where the quadrilateral elements were split along the diagonals to create triangles. Throughout this section, mesh dimensions are reported as $N \times M$, where $N$ and $M$ refer to the number of quadrilaterals in the $x$ and $y$ directions, respectively. The total number of elements for each case was $2N \times 2M$ due to the splitting mentioned previously. Finally, each simulation was performed in the open-source finite element software package FEniCS~\cite{alnaes2015fenics}. 

\subsection{Order of Accuracy Test}
For the first test case, we compared solutions from our method to an exact solution in order to check the convergence rate. To this end, we considered the traveling temperature wave proposed by~\cite{dallmann2015finite}, which can be defined on the rectangular domain $\Omega = \left[-0.5,1.5\right]
\times\left[0,1\right]$ as follows
\begin{align*}
    \ubold &= \left(100, 0\right), \qquad
    \widetilde{p} = 1, \\[1.5ex]
    T &= \frac{1}{\sqrt{1+3200\alpha t}} \exp \left[ - \frac{200\left(1 + 200t - 2x\right)^2}{1+3200\alpha t} \right],
\end{align*}
for $t\in[0,0.005]$. We also defined the gravitational acceleration and forcing functions as follows 
\begin{align*}
    \gbold &= \left(0, -1\right), \qquad \widetilde{f}_{T} = 0, \\[1.5ex]
    \widetilde{\bm{f}}_{u} &= \left(0, -\beta \frac{1}{\sqrt{1+3200\alpha t}} \exp \left[ - \frac{200\left(1 + 200t - 2x\right)^2}{1+3200\alpha t} \right] \right).
\end{align*}
Here, we considered a dimensionless formulation with $\alpha = \beta = \gamma = \nu = \rho = 1$. In subsequent cases, a dimensional formulation was considered. 

For this case, a temperature peak was initially located at $x = \frac{1}{2}$ at $t = 0$ and moved to $x = 1$ at the final time $t = 0.005$. We compared our results to the exact solution at the final time. The time step for all polynomial degrees considered was $\Delta t = 1\times 10^{-5}$. Periodic boundary conditions were applied at the domain boundaries. The meshes were uniform and consisted of $N \times \frac{N}{2}$ elements. On these meshes, we utilized Taylor-Hood spaces with degrees $k = 1,2,$ and $3$. For the case of $k = 1$, mesh resolutions of $N = 16, 32, 64,$ and $128$ were used, and for $k = 2$ and $3$ mesh resolutions of $N = 4, 8, 16,$ and $32$ were used. We expected the discrete temperature to converge at a rate of $k + 2$ since the associated polynomial space was degree $k + 1$.   
\begin{table}[h]
\centering
\begin{tabular}{|l|l|l|l|l|}
\hline
$k$                  & $h$       & dofs    & $L^2$ error & order  \\ \hline
\multirow{4}{*}{1} & 0.17677 & 1777   & 1.8287e-3  & -      \\ \cline{2-5} 
                   & 0.08838 & 6881   & 2.3516e-4  & 2.9591 \\ \cline{2-5} 
                   & 0.04419 & 27073  & 2.4724e-5  & 3.2496  \\ \cline{2-5} 
                   & 0.02209 & 107393 & 2.8976e-6  & 3.0930 \\ \hline
\multirow{4}{*}{2} & 0.70710 & 293    & 2.4046e-2  & -      \\ \cline{2-5} 
                   & 0.35355 & 1081   & 4.0106e-3  & 2.5839 \\ \cline{2-5} 
                   & 0.17677 & 4145   & 1.9883e-4  & 4.3342 \\ \cline{2-5} 
                   & 0.08838 & 16225  & 1.0639e-5  & 4.2241 \\ \hline
\multirow{4}{*}{3} & 0.70710 & 517    & 2.0978e-2  & -      \\ \cline{2-5} 
                   & 0.35355 & 1945   & 2.1626e-3  & 3.2780 \\ \cline{2-5} 
                   & 0.17677 & 7537   & 2.5591e-5  & 6.4010 \\ \cline{2-5} 
                   & 0.08838 & 29665  & 7.6643e-7  & 5.0613 \\ \hline
\end{tabular}
\caption{Temperature $L^2$ error for various polynomial degrees $k$ and maximum element diameters $h$. }
\label{Conv_tab}
\end{table}
We see from table~\ref{Conv_tab} that we recovered the expected orders of accuracy. 

\subsection{Heated Cavity Test}
The second test case was a heated cavity as described by~\cite{dallmann2015finite}. This case consisted of a square cavity $\Omega = \left[0,1\right]^2$ with stationary walls. The flow was driven by a temperature difference between the left and right walls, and thus consisted of purely natural convection. Gravity $\bold{g}=(0,-1)^T$ $\frac{m}{s^2}$ in conjunction with buoyancy effects influenced the fluid motion. For all heated cavity simulations, a fixed Prandlt number $Pr = 0.71$ defined as
\begin{align*}
    Pr = \frac{\nu}{\alpha}, 
\end{align*}
was used. Fluid properties for all cases were $\alpha = 2.208 \times 10^{-5}$ $\frac{m^2}{s}$ , $C_v = 717.8$ $\frac{J}{kg-K}$, $\gamma = 1$, $\rho=1$ $\frac{kg}{m^3}$ and $\nu = 1.568 \times 10^{-5}$ $\frac{m^2}{s}$ which denote an air-like fluid. All walls were equipped with no-slip boundary conditions, where the left and right walls had fixed Dirichlet temperature boundary conditions $T_{left} = 0.5$ K and $T_{right} = -0.5$ K, and where the top and bottom walls were adiabatic. For this set of simulations the Rayleigh number $Ra$ was varied throughout. Specifically, we decided to vary the Rayleigh number by varying the parameter $\beta$, using the following formulas
\begin{align*}
    Ra = \frac{\bold{g}\beta\Delta T L^3}{\nu^2}, \qquad \Delta T = (T_{left} - T_{right}),
\end{align*}
where $L$ is the width of the cavity. We were interested in computing the average steady state Nusselt number  $\overline{Nu}$ based on the horizontal heat flux as follows 
\begin{align*}
    \overline{Nu} &= \int_{0}^{1} \frac{H L}{\alpha} dx, \qquad H = \frac{\langle q_x \rangle_y}{A \Delta T}, \\[1.5ex]
    \langle q_x \rangle_y & = \int_{0}^{1}q_x dy, \qquad
    q_x = u_x T -\alpha \frac{\partial T}{\partial x},
\end{align*}
where $A$ is the domain area, and $u_x$ is the velocity in the $x$ direction. The Nusselt number was calculated at Rayleigh numbers of $Ra = 10^4, 10^5, 10^6,$ and $10^7$ which enabled the flow to remain laminar. At each Rayleigh number, four different grids of size $N\times N$ were considered with $N =8,16,32,$ and $64$. The only exception was Rayleigh number $Ra = 10^7$ as the $8 \times 8$ grid could not be converged for this mesh. The mesh used for each simulation was biased towards the walls using the mapping proposed by \cite{lowe2011finite}.  
\begin{align*}
    x_{refined} &= \Big ( x - \frac{1}{2\pi}(1-a)\sin(2\pi x) \Big) \\[1.5ex]
    y_{refined} &= \Big ( y - \frac{1}{2\pi}(1-b)\sin(2\pi y) \Big) \\[1.5ex]
    a &= \left( \, \overline{Nu} \,\right)^{-1}, \qquad b = \left( \, \overline{Nu} \, \right)^{-1/3}.
\end{align*}
 Note: in order to generate our meshes, we used the average Nusselt numbers reported in~\cite{dallmann2015finite}. On each mesh, Taylor-Hood elements of degree $k = 2$ and $3$ were used.

At the lowest Rayleigh number, the flow was dominated by a large central vortex seen in figure~\ref{fig:HC_4}. As the Rayleigh number was increased, this vortex disappeared and thin boundary layers developed on the left and right walls as seen in figures~\ref{fig:HC_R6} and~\ref{fig:HC_R7}. This is the same behavior observed by~\cite{dallmann2015finite}.
\begin{figure}[h!]
  \centering
  \includegraphics[width=1.0\linewidth]{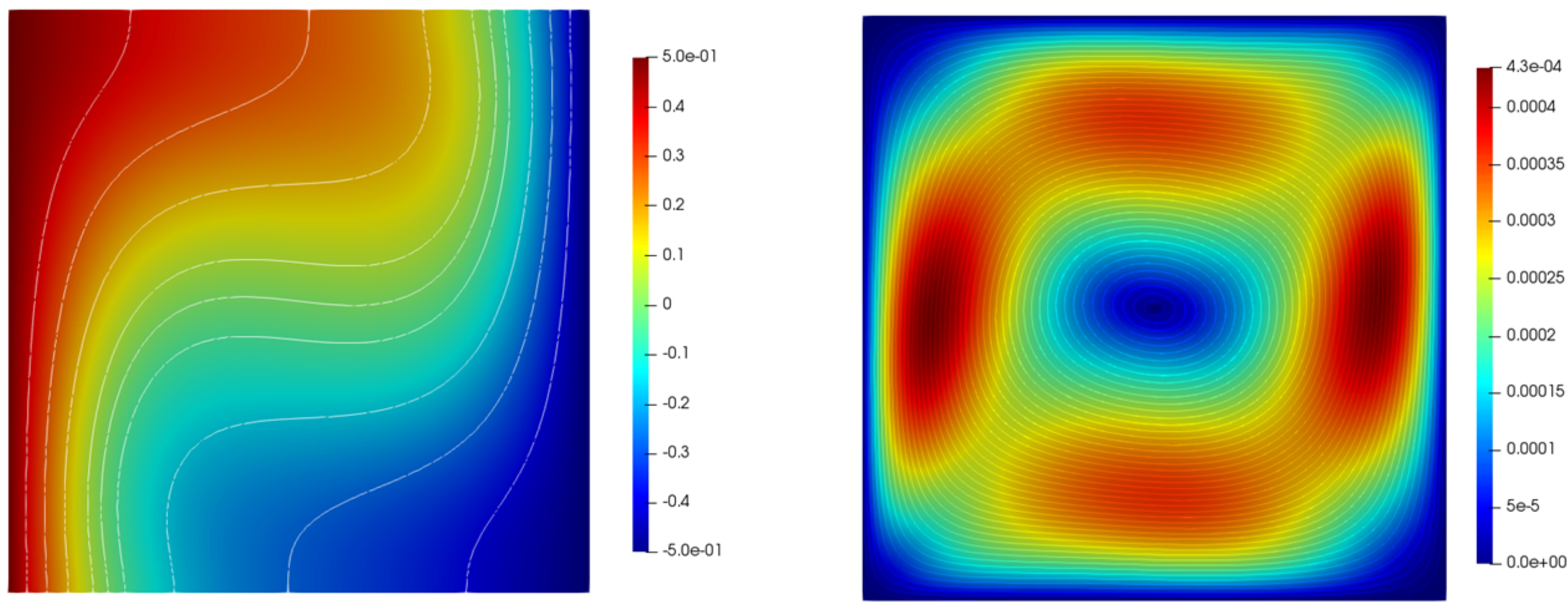}
  \vspace{-1.1cm}
\caption{Temperature (left) and velocity magnitude (right) for Rayleigh number $Ra = 10^4$. The $64\times64$ mesh with $k = 3$ was used to generate these results. }
\label{fig:HC_4}
\end{figure}

\begin{figure}[h!]
  \centering
  \includegraphics[width=1.0\linewidth]{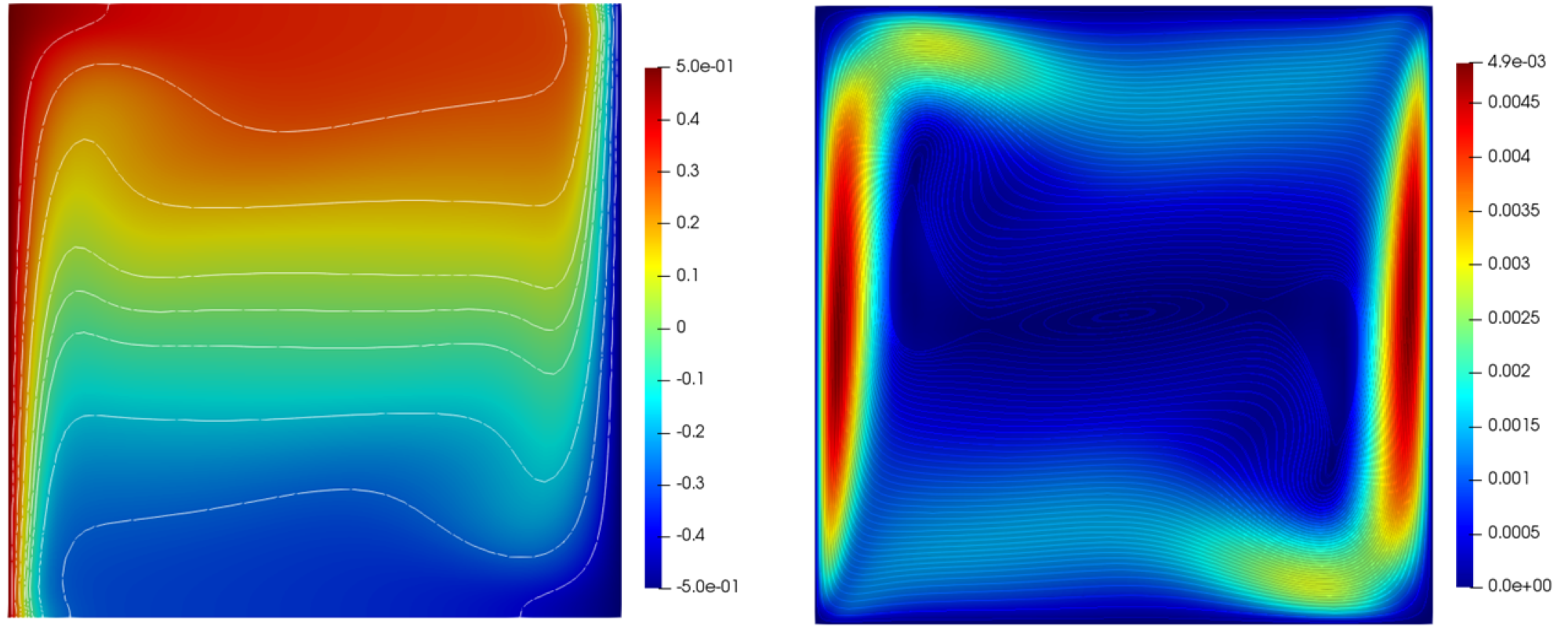}
  \vspace{-1.1cm}
\caption{Temperature (left) and velocity magnitude (right) for Rayleigh number $Ra = 10^6$. The $64\times64$ mesh with $k = 3$ was used to generate these results. }
\label{fig:HC_R6}
\end{figure}

\begin{figure}[h!]
  \centering
  \includegraphics[width=1.0\linewidth]{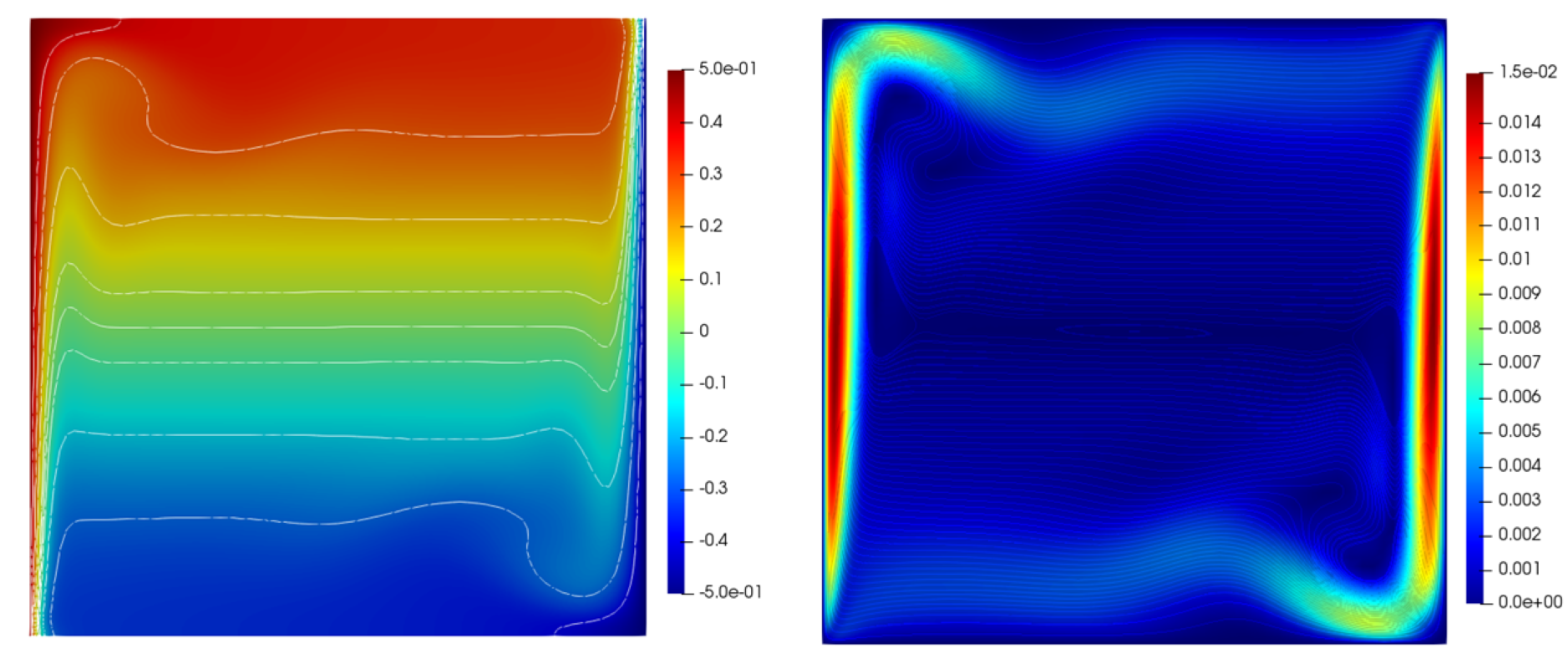}
  \vspace{-1.1cm}
\caption{Temperature (left) and velocity magnitude (right) for Rayleigh number $Ra = 10^7$. The $64\times64$ mesh with $k = 3$ was used to generate these results. }
\label{fig:HC_R7}
\end{figure}

\begin{table}[h]
\centering
\setlength{\extrarowheight}{2pt}
\begin{tabular}{|l|l|l|l|l|}
\hline
\multirow{2}{*}{$Ra$}                                          & \multirow{2}{*}{$N$} & \multirow{2}{*}{$\overline{Nu}$ ref.} & \multicolumn{2}{c|}{$\overline{Nu}$} \\[1.5ex]\cline{4-5}
                                                             &                    &                          & $k = 2$      & $k = 3$      \\ \hline
\multicolumn{1}{|c|}{\multirow{5}{*}{$10^4$}} & 8                  & -                        & 2.24480    & 2.24481    \\ \cline{2-5} 
\multicolumn{1}{|c|}{}                                       & 16                 & 2.24478                  & 2.24481    & 2.24481    \\ \cline{2-5} 
\multicolumn{1}{|c|}{}                                       & 32                 & 2.24481                  & 2.24481    & 2.24481    \\ \cline{2-5} 
\multicolumn{1}{|c|}{}                                       & 64                 & 2.24482                  & 2.24481    & 2.24481    \\ \hline 
\multirow{5}{*}{$10^5$}                       & 8                  & -                        & 4.52206    & 4.52162    \\ \cline{2-5} 
                                                             & 16                 & 4.52124                  & 4.52163    & 4.52163    \\ \cline{2-5} 
                                                             & 32                 & 4.52162                  & 4.52163    & 4.52163    \\ \cline{2-5} 
                                                             & 64                 & 4.52163                  & 4.52163    & 4.52163    \\ \hline
\multirow{5}{*}{$10^6$}                       & 8                  & -                          & 8.81679    & 8.82497    \\ \cline{2-5} 
                                                             & 16                 & 8.81573                  & 8.82514    & 8.82519    \\ \cline{2-5} 
                                                             & 32                 & 8.82502                  & 8.82520    & 8.82520    \\ \cline{2-5} 
                                                             & 64                 & 8.82519                  & 8.82520    & 8.82520    \\ \hline 
\multirow{4}{*}{$10^7$}                       & 16                 & 15.3718                  & 16.5190    & 16.5227    \\ \cline{2-5} 
                                                             & 32                 & 16.5156                  & 16.5229    & 16.5230    \\ \cline{2-5} 
                                                             & 64                 & 16.5230                  & 16.5230    & 16.5230    \\ \hline
\end{tabular}
\caption{Average Nusselt numbers compared at various Rayleigh numbers, mesh resolutions, and polynomial degrees. Reference values are taken from~\cite{dallmann2015finite}. }
\label{Dallman_table}
\end{table}
We also saw very close agreement with the average Nusselt number for all mesh resolutions and polynomial degrees as seen in table~\ref{Dallman_table}. This leads us to  believe  that our method is able to accurately capture purely buoyancy-driven flows. 

\subsection{Heated Cavity with Moving Wall Test}
The final test case was a heated cavity with one moving wall, i.e.~a mixed convection case. Here, the top wall moved at constant velocity $V_{lid}$ and was heated, while the bottom wall was cooled as proposed by~\cite{iwatsu1993mixed}. This case was run at a fixed Grashof number $Gr = 10^4$ along with varied Richardson numbers $Ri$, where 
\begin{align*}
    Gr = \frac{\bold{g} \beta L \Delta T}{\nu^2}, \qquad \Delta T = (T_{top} - T_{bottom}), \qquad
    Ri = \frac{Gr}{Re^2}, \qquad
    Re = \frac{V_{lid}L}{\nu}.
\end{align*}
We used the same fluid properties prescribed in the previous heated cavity case. In this case, we considered Richardson numbers $Ri = 0.01, 0.06$ and $1.0$. The domain was a box $\Omega = \left[0,1\right]^2$ with a uniform mesh that consisted of $64\times64$, $k = 2$ Taylor-Hood elements. The heated top wall was held at a constant temperature $T_{top} = 1$ K, while the bottom cold wall was held at $T_{bottom} = 0$ K, with the remaining walls having adiabatic boundary conditions. Gravity was again present in this case with $\bold{g}=(0,-1)^T ~\frac{m}{s^2}$. The quantity of interest for this case was again the average steady state Nusselt number $\overline{Nu}$, however for this case we were only interested in the Nusselt number along the top heated wall, referred to henceforth as $\overline{Nu}_{wall}$. We define the vertical heat flux $q_y$ and $\overline{Nu}_{wall}$ as
\begin{align*}
    \overline{Nu}_{wall} &= \frac{HL}{\alpha}, \qquad
    H = \frac{\langle q_y \rangle_x}{A \Delta T}, \\[1.5ex]
    q_y &= - \frac{\partial T}{\partial y}, \qquad 
    \langle q_y \rangle_x = \int_{0}^{1} [q_y]_{y=1} dx. 
\end{align*}
In this case, the best agreement with the reference data occurs for the largest Richardson number $Ri = 1$ as seen in table~\ref{MC_table}. 
\begin{table}[h!]
\centering
\setlength{\extrarowheight}{2pt}
\begin{tabular}{|l|l|l|}
\hline
$Ri$   & $\overline{Nu}_{wall}$ ref.  & $\overline{Nu}_{wall}$   \\ \hline
1.0  & 1.34    & 1.39 \\ \hline
0.06 & 3.62    & 3.87 \\ \hline
0.01 & 6.29    & 6.52 \\ \hline
\end{tabular}
\caption{Comparison of $\overline{Nu}_{wall}$ for our method with reference data at various Richardson numbers. We use~\cite{iwatsu1993mixed} for the reference values.}
\label{MC_table}
\end{table}
The other two cases with Richardson numbers $Ri = 0.06$ and $0.01$ show some deviation from the reference. We expected to see this deviation because in our method, unlike the reference's method, we included the viscous dissipation term inside the temperature equation. The viscous dissipation term became relevant at these Richardson numbers because as the Richardson numbers decreases the lid velocity increases, as seen in figures~\ref{fig:MC_Ri1},~\ref{fig:MC_Ri006}, and~\ref{fig:MC_Ri001}. The flow at these Richardson numbers behaves more like a lid-driven cavity flow as opposed to a natural convection dominated flow. Therefore at these higher lid speeds, the velocity gradients in the flow became non-negligible and have a more pronounced effect on the temperature field (via the dissipation term). This deviation in Nusselt number was absent from the heated cavity case because the flow in that case was subject to natural convection only, and thus possessed smaller velocity gradients as compared to the mixed convection case. 

\begin{figure}[h!]
  \centering
  \includegraphics[width=1.0\linewidth]{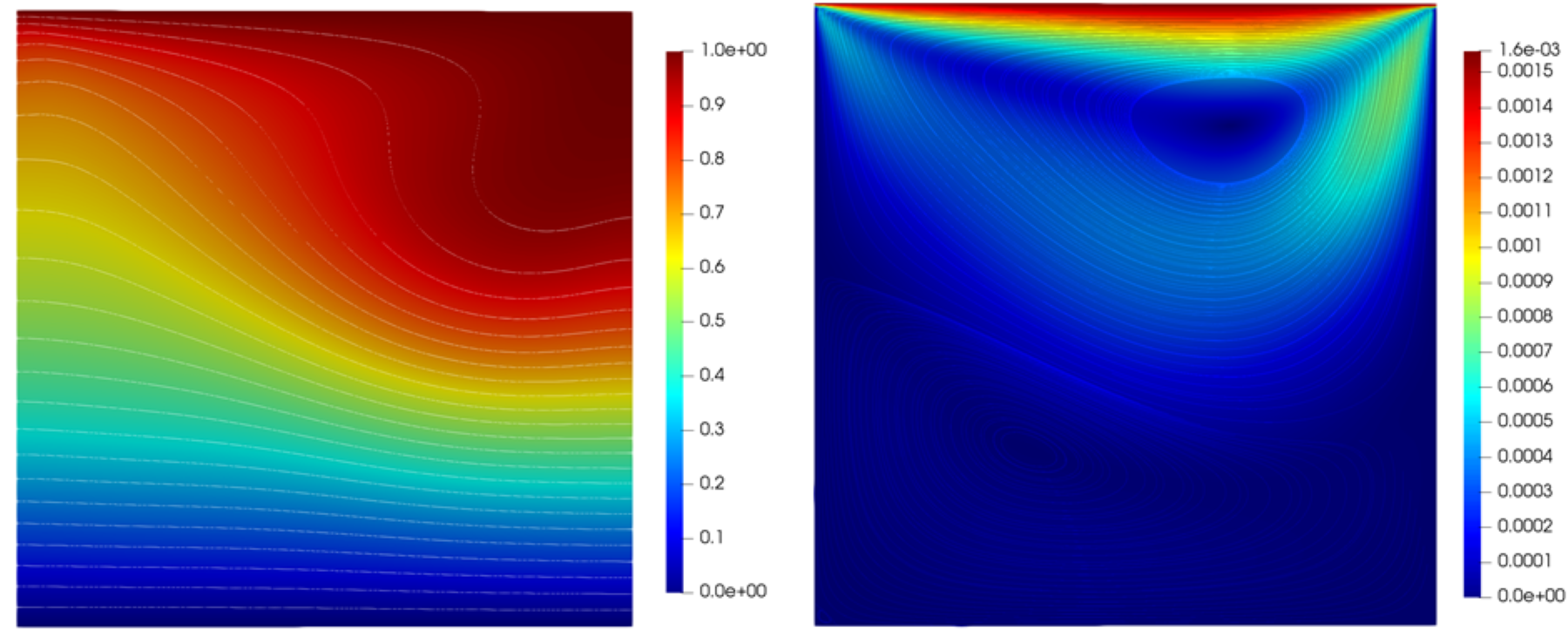}
  \vspace{-1.1cm}
\caption{Temperature contours (left) and velocity magnitude contours with streamlines (right)  for Richardson number $Ri = 1$. A $64 \times 64$ mesh with $k$ = 2 was used to generate these results.}
\label{fig:MC_Ri1}
\end{figure}

\begin{figure}[h!]
  \centering
  \includegraphics[width=1.0\linewidth]{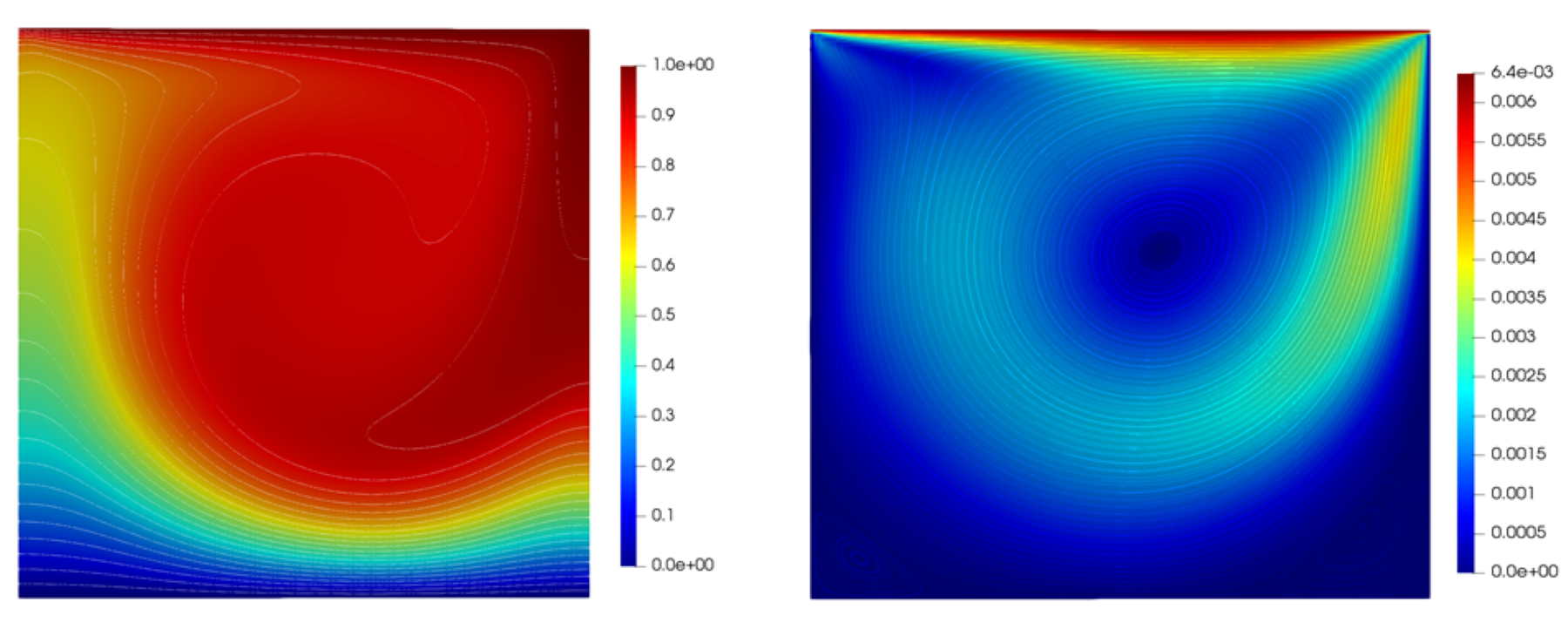}
  \vspace{-1.1cm}
\caption{Temperature contours (left) and velocity magnitude contours with streamlines (right)  for Richardson number $Ri = 0.06$. A $64 \times 64$ mesh with $k$ = 2 was used to generate these results.}
\label{fig:MC_Ri006}
\end{figure}

\begin{figure}[h!]
  \centering
  \includegraphics[width=1.0\linewidth]{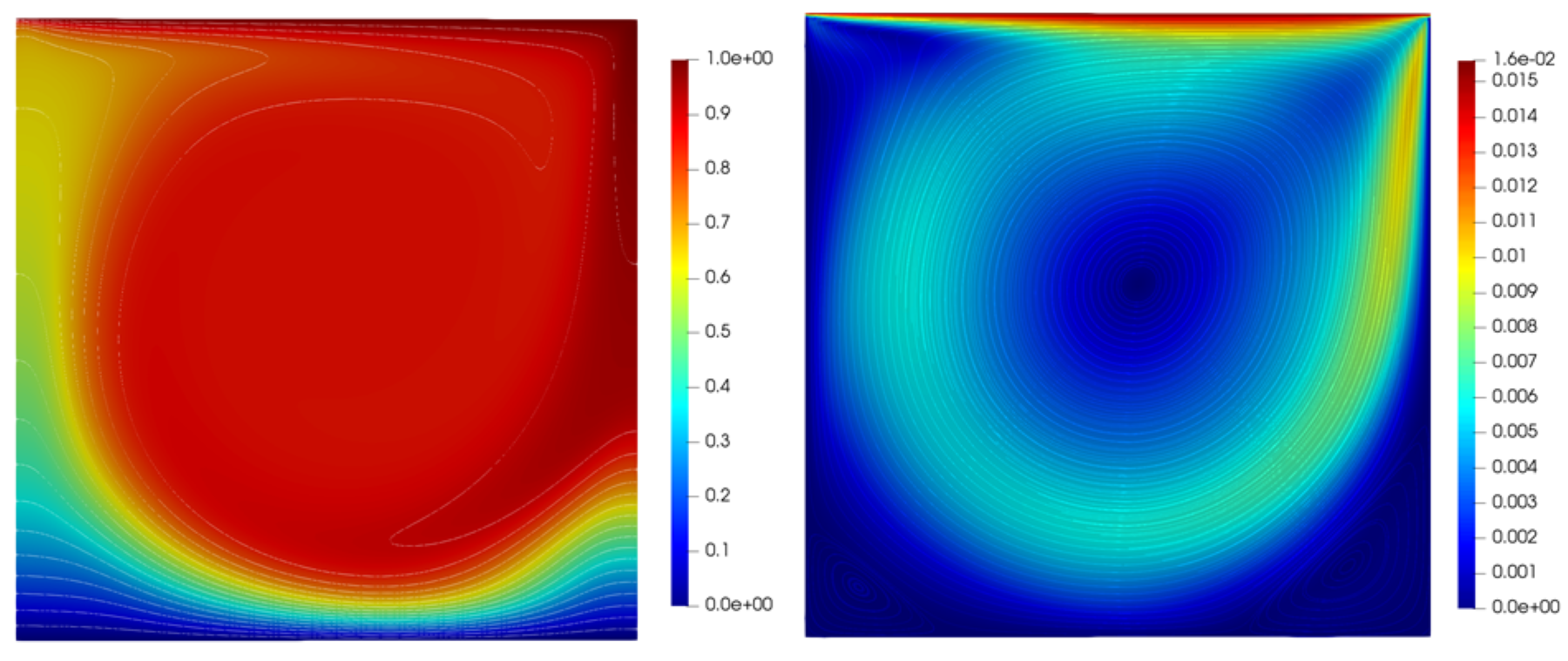}
  \vspace{-1.1cm}
\caption{Temperature contours (left) and velocity magnitude contours with streamlines (right) for Richardson number $Ri = 0.01$. A $64 \times 64$ mesh with $k$ = 2 was used to generate these results.}
\label{fig:MC_Ri001}
\end{figure}

\pagebreak

\section{Conclusion}

In the present study, the mixed methods first put forward by Chen and Williams~\cite{chen2020versatile} are extended to non-isothermal incompressible flows. The primary advantages that these new methods possess are their generality and flexibility, as they utilize the full compressible formulation of the stress tensor and the expanded formulation of the temperature equation (which retains the dilatational pressure work and viscous dissipation terms). In this paper, the new versatile methods are constructed for weakly  divergence-free Taylor-Hood elements, and pointwise divergence-free BDM and RT elements. Next, we rigorously derive a new condition that governs the L2-stability of the discrete temperature fields for these methods. Finally, the accuracy of the Taylor-Hood method is tested using three well-known cases from the literature; these tests are used to confirm the formal order of accuracy of the method, and demonstrate its performance on problems with natural and mixed convection. The analysis and numerical experiments in this article serve as a stepping stone towards the application of these methods to weakly-compressible and fully-compressible flows.


\section*{Declaration of Interests}

None.

\section*{Funding Sources}

This research did not receive any specific grant from funding agencies in the public, commercial, or not-for-profit sectors.   

\setcounter{section}{0}
\setcounter{theorem}{0}
\renewcommand{\thetheorem}{\thesection.\arabic{theorem}}
\setcounter{equation}{0}
\renewcommand{\theequation}{\thesection.\arabic{equation}}
\setcounter{figure}{0}
\setcounter{table}{0}

\appendix
\section*{Appendix}
\renewcommand{\thesection}{A} 
\setcounter{equation}{0}
\renewcommand{\theequation}{\thesection.\arabic{equation}}

\subsection{Numerical Fluxes} \label{numerical_flux_section}

We suggest the following formulations for the numerical fluxes
\begin{align}
\widehat{\bm{\sigma}}_{\text{inv}} &= \llcurve \ubold_h \rrcurve \otimes \llcurve \ubold_h \rrcurve + \llcurve \widetilde{p}_h \rrcurve \mathbb{I} + \zeta \left| \ubold_h \cdot \nbold_F \right| \llbracket \ubold_h \otimes \nbold \rrbracket, \\[1.5ex]
\widehat{\bm{\sigma}}_{\text{vis}} & = \llcurve \nabla_h \ubold_h + \nabla_h \ubold_{h}^{T} - \frac{2}{3} \left(\nabla \cdot \ubold_h \right) \mathbb{I} \rrcurve -\frac{\eta}{h_F} \llbracket \ubold_h \otimes \nbold \rrbracket, \\[1.5ex]
\widehat{\bm{\phi}}_{\text{inv}} &= \llcurve T_h \rrcurve \ubold_h + \delta \left| \ubold_h \cdot \nbold_F \right| \llbracket T_h \, \nbold \rrbracket,  \\[1.5ex]
\widehat{\bm{\phi}}_{\text{vis}} & = \llcurve \nabla_h T_h \rrcurve - \frac{\varepsilon}{h_F} \llbracket T_h \, \nbold \rrbracket, \\[1.5ex]
\widehat{\bm{\varphi}}_{\text{vis}} &= \llcurve \ubold_h \rrcurve, \\[1.5ex]
\widehat{\lambda}_{\text{vis}} &= \llcurve T_h \rrcurve,
\end{align}
where $\zeta$, $\eta$, $\delta$, and $\varepsilon$ are parameters which control the amount of dissipation introduced by the fluxes.

\subsection{Pointwise Divergence-Free Methods} \label{div_free_section}

In this section, we construct a pointwise divergence-free class of mixed methods for solving  Eqs.~\eqref{mass_cons} -- \eqref{energy_cons}. These methods can be formally stated as follows: i) identify function spaces $Q_h = Q_h^{DC}$, $R_h = R_h^C$, and $\bm{W}_h = \bm{W}_h^{RT}$ or $\bm{W}_h = \bm{W}_h^{BDM}$, ii) choose test functions $\left(q_h, r_h, \wbold_h \right)$ that span $Q_h \times R_h \times \bm{W}_h$, and iii) find unknowns $\left(\widetilde{p}_h, T_h, \ubold_h \right)$ in $Q_h  \times R_h \times \bm{W}_h$ that satisfy 
%
%
\begin{align}
& \ipt{ \nabla \cdot \ubold_h}{q_h} = 0, \label{mass_cons_disc_div_free}
\end{align}
\begin{align}
\nonumber & \ipt{\partial_t \ubold_h }{\wbold_h} - \ipt{ \ubold_h \otimes \ubold_h}{\nabla_h \wbold_h} - \ipt{\widetilde{p}_h}{\nabla \cdot \wbold_h} + \ipbt{\widehat{\bm{\sigma}}_{\text{inv}} \, \nbold}{\wbold_h} \\[1.5ex]  
\nonumber & + \nu_h \bigg[ \ipt{\nabla_h \ubold_h + \nabla_h \ubold_{h}^{T}}{\nabla_h \wbold_h} - \ipbt{\widehat{\bm{\sigma}}_{\text{vis}} \, \nbold}{\wbold_h} \\[1.5ex]
\nonumber & + \ipbt{\widehat{\bm{\varphi}}_{\text{vis}} - \ubold_h}{\left( \nabla_h \wbold_h + \nabla_h \wbold_{h}^{T} \right) \nbold} \bigg]
\\[1.5ex]
&= - \ipt{\beta_h T_h \gbold}{\wbold_h} + \ipt{\widetilde{\bm{f}}_{\ubold}}{\wbold_h}, \label{moment_cons_disc_div_free}
\end{align}
\begin{align}
& \nonumber \ipt{\partial_t T_h}{r_h} - \ipt{T_h \ubold_h}{\nabla_h r_h} + \ipbt{\widehat{\bm{\phi}}_{\text{inv}}  \cdot \nbold}{r_h}  \\[1.5ex]
\nonumber & + \alpha_h \gamma_h \Bigg[ \ipt{\nabla_h T_h}{\nabla_h r_h}  -\ipbt{\widehat{\bm{\phi}}_{\text{vis}}  \cdot \nbold}{r_h} + \ipbt{\widehat{\lambda}_{\text{vis}} - T_h}{\nabla_h r_h \cdot \nbold} \Bigg] \\[1.5ex]
&  = \frac{\nu_h}{C_v} \ipt{\left(\nabla_h \ubold_h + \nabla_h \ubold_{h}^{T} \right):\nabla_h \ubold_h}{r_h} + \ipt{\widetilde{f}_{T}}{r_h}. \label{energy_cons_disc_div_free}
\end{align}
This set of equations can be rewritten compactly as follows
\begin{align}
& b_h \left(\ubold_h, q_h \right) = 0, \label{incomp_form_two_A} 
\\[1.5ex]
\nonumber & \ipt{\partial_t \, \ubold_h}{\wbold_h} + c_h \left(\ubold_h; \ubold_h, \wbold_h \right) +\nu_h a_h \left(\ubold_h, \wbold_h \right)  - b_h \left( \wbold_h, \widetilde{p}_h \right)
\\[1.5ex]
& = - \ipt{\Xi \left(T_h\right)}{\wbold_h} + \ipt{\widetilde{\bm{f}}_{\ubold}}{\wbold_h}, \label{incomp_form_two_B}
\\[1.5ex]
&\nonumber \ipt{\partial_t T_h}{r_h} + \underline{c}_h\left(\ubold_h; T_h, r_h\right) + \alpha_h \gamma_h \, \underline{a}_h \left(T_h, r_h \right) 
\\[1.5ex]
&= \ipt{\Phi}{r_h} +  \ipt{\widetilde{f}_{T}}{r_h}. \label{incomp_form_two_C}
\end{align}
The operators $b_h$, $\Xi$, and $\underline{a}_h$ were previously defined in Eqs.~\eqref{bilinear_press_div}, \eqref{grav_term}, and \eqref{diff_bilinear_temp}. The remaining operators $c_h$, $a_h$, $\underline{c}_h$, and $\Phi$ can be written as follows
\begin{align}
c_h \left(\bm{\xi}_h; \vbold_h, \wbold_h \right) &= \ipt{\bm{\xi}_h \cdot \nabla_h \vbold_h}{\wbold_h}  \\[1.5ex]
& \nonumber - \iipbf{ \left( \bm{\xi}_h \cdot \nbold_F \right) \llbracket \vbold_h \rrbracket}{\llcurve \wbold_h \rrcurve} + \zeta \iipbf{\left| \bm{\xi}_h \cdot \nbold_F \right|\llbracket \vbold_h \rrbracket}{\llbracket \wbold_h \rrbracket}, 
\end{align}
\begin{align}
\nonumber a_h \left(\vbold_h, \wbold_h \right) &=  \ipt{ \nabla_h \vbold_h + \nabla_h \vbold_{h}^{T}}{\nabla_h \wbold_h}   -\ipbf{\llbracket \vbold_h \rrbracket}{\llcurve  \nabla_h \wbold_h + \nabla_h \wbold_{h}^{T} \rrcurve \nbold_F} \\[1.5ex]
& -\ipbf{\llbracket \wbold_h \rrbracket}{\llcurve  \nabla_h \vbold_h + \nabla_h \vbold_{h}^{T} \rrcurve \nbold_F} + \ipbf{\frac{\eta}{h_F} \llbracket \vbold_h \rrbracket}{\llbracket \wbold_h \rrbracket},
\label{diff_bilinear_div_free}
\end{align}
\begin{align}
\underline{c}_h \left(\bm{\xi}_h; \theta_h, r_h \right) &= \ipt{\bm{\xi}_h \cdot \nabla_h \theta_h}{r_h} \label{trilinear_temp_div_free}\\[1.5ex]
& \nonumber - \iipbf{ \left( \bm{\xi}_h \cdot \nbold_F \right) \llbracket \theta_h \rrbracket}{\llcurve r_h \rrcurve} + \delta \iipbf{\left| \bm{\xi}_h \cdot \nbold_F \right|\llbracket \theta_h \rrbracket}{\llbracket r_h \rrbracket}, 
\end{align}
\begin{align}
    \Phi \left(\vbold_h\right) &= \frac{\nu_h}{C_v}  \left(\left(\nabla_h \vbold_h + \nabla_h \vbold_{h}^{T} \right):\nabla_h \vbold_h\right).
\end{align}

\subsection{Supporting Results}

\begin{lemma}[Broken Norm Inequalities] \label{grad_inequality_lemma}
Suppose that $\wbold \in \bm{W}^{1,p} \left(\mathcal{T}_h\right)$ and $p \geq 2$. Then, the following inequalities hold
\begin{align}
    & \left\| \nabla_h \cdot \wbold \right\|_{L^{p}\left(\Omega\right)} \leq d^{(p-1)/p} \left\| \wbold \right\|_{\text{grad},p}, \label{ineq_one}  
    \\[1.5ex]
    &\left\| \nabla_h \wbold : \nabla_h \wbold \right\|_{L^{p/2}\left(\Omega\right)} \leq d^{2(p-2)/p} \left\| \wbold \right\|_{\text{grad},p}^2, \label{ineq_two}
    \\[1.5ex]
    &\left\| \nabla_h \wbold^T : \nabla_h \wbold \right\|_{L^{p/2}\left(\Omega\right)} \leq d^{2(p-2)/p} \left\| \wbold \right\|_{\text{grad},p}^2. \label{ineq_three}
\end{align}
\end{lemma}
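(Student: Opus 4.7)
The plan is to reduce each inequality to a pointwise algebraic bound on the relevant combination of partial derivatives, integrate over each element, sum over the mesh, and then discard the (nonnegative) face-jump contribution to the $\|\cdot\|_{\text{grad},p}$ norm so that only the broken volume integral remains on the right. The main analytic tool is the power-mean (Jensen) inequality in the form $\left(\sum_{k=1}^{N} a_k\right)^{q}\le N^{q-1}\sum_{k=1}^{N} a_k^{q}$ for nonnegative $a_k$ and $q\ge 1$, which will absorb all the combinatorial constants.

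For the first inequality I would apply the above with $N=d$, $q=p$, and $a_j=|\partial_j w_j|$ to obtain the pointwise bound $|\nabla\cdot\wbold|^p\le d^{p-1}\sum_{j}|\partial_j w_j|^p$ on each element $K$. Integrating over $K$ and enlarging the sum from $j$ only to all pairs $(i,j)$ gives $\int_K|\nabla\cdot\wbold|^p\,dV\le d^{p-1}\int_K\sum_{i,j}|\partial_j w_i|^p\,dV$. Summing over $K\in\mathcal{T}_h$ and dropping the nonnegative face-jump contribution yields $\|\nabla_h\cdot\wbold\|_{L^p(\Omega)}^p\le d^{p-1}\|\wbold\|_{\text{grad},p}^p$, and taking $p$-th roots produces the claimed constant $d^{(p-1)/p}$.

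For the second inequality, applying the power-mean inequality with $N=d^2$, $q=p/2$, and $a_{ij}=(\partial_j w_i)^2$ gives the pointwise estimate $\bigl(\sum_{i,j}(\partial_j w_i)^2\bigr)^{p/2}\le d^{p-2}\sum_{i,j}|\partial_j w_i|^{p}$. Integrating over $K$, summing over the mesh, and again discarding the face-jump term recovers $\|\nabla_h\wbold:\nabla_h\wbold\|_{L^{p/2}(\Omega)}^{p/2}\le d^{p-2}\|\wbold\|_{\text{grad},p}^{p}$, and raising to the $2/p$ power yields the stated constant $d^{2(p-2)/p}$.

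For the third inequality, the key observation is that relabeling indices yields $\sum_{i,j}(\partial_i w_j)^2=\sum_{i,j}(\partial_j w_i)^2$, so the Cauchy--Schwarz inequality gives the pointwise bound $|\nabla_h\wbold^{T}:\nabla_h\wbold|=\bigl|\sum_{i,j}(\partial_i w_j)(\partial_j w_i)\bigr|\le\sum_{i,j}(\partial_j w_i)^2=\nabla_h\wbold:\nabla_h\wbold$. The third estimate then follows immediately from the second one by monotonicity of the $L^{p/2}$ norm. I expect no serious obstacle in this proof; the only subtlety is keeping the combinatorial bookkeeping with powers of $d$ consistent across the two norm exponents $p$ and $p/2$, and noting explicitly that dropping the boundary-jump term on the right-hand side is legitimate because it is nonnegative.
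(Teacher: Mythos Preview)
Your proposal is correct and follows essentially the same route as the paper: both proofs use the power-mean (Jensen) inequality pointwise, integrate and sum over elements, and then drop the nonnegative face-jump contribution to pass to the $\|\cdot\|_{\text{grad},p}$ norm. Your explicit Cauchy--Schwarz reduction of the third inequality to the second is a small clarification of what the paper simply calls ``virtually identical,'' but the ideas and constants match exactly.
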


\begin{proof}
Let us begin by noting that
\begin{align*}
    &\left\| \nabla_h \cdot \wbold \right\|_{L^{p}\left(\Omega\right)} \\[1.5ex]
    &= \left( \sum_{K \in \mathcal{T}_h} \int_{K} \left(\sum_{i}^{d} \left(\partial_i w_i \right) \right)^{p}  dV \right)^{1/p}  \leq \left( \sum_{K \in \mathcal{T}_h} \int_{K} \left(\sum_{i}^{d} \left|\partial_i w_i \right| \right)^{p}  dV \right)^{1/p} \\[1.5ex]
    &\leq d^{(p-1)/p} \left( \sum_{K \in \mathcal{T}_h} \int_{K} \sum_{i}^{d} \left(\partial_i w_i \right)^{p}  dV \right)^{1/p} \leq d^{(p-1)/p} \left( \sum_{K \in \mathcal{T}_h} \int_{K} \sum_{i,j}^{d} \left(\partial_j w_i \right)^{p}  dV \right)^{1/p}.
\end{align*}
Here, we have used the power mean inequality. Upon combining this result with the definition for the norm $\left\| \cdot \right\|_{\text{grad},p}$, we obtain the first result (Eq.~\eqref{ineq_one}).

The proofs of the remaining results (Eqs.~\eqref{ineq_two} and \eqref{ineq_three}) are virtually identical. In what follows, we will simply show the proof for Eq.~\eqref{ineq_two}. We begin by expanding the L$p/2$-norm
\begin{align}
    \left\| \nabla_h \wbold : \nabla_h \wbold \right\|_{L^{p/2}\left(\Omega\right)} = \left\| \sum_{i,j}^d \left( \partial_{j} w_i \right)^2 \right\|_{L^{p/2}\left(\Omega\right)} = \left( \sum_{K \in \mathcal{T}_h} \int_{K} \left( \sum_{i,j}^d \left( \partial_{j} w_i \right)^2 \right)^{p/2} dV \right)^{2/p}. \label{norm_inequality_one}
\end{align}
Then, by the power mean inequality, we have
\begin{align}
    \nonumber \sum_{K \in \mathcal{T}_h} \int_{K} \left( \sum_{i,j}^d \left( \partial_{j} w_i \right)^2 \right)^{p/2} dV &\leq d^{(p-2)} \left( \sum_{K \in \mathcal{T}_h} \int_{K}  \sum_{i,j}^d \left| \partial_{j} w_i \right|^p dV \right)   
    \\[1.5ex]
    \left( \sum_{K \in \mathcal{T}_h} \int_{K} \left( \sum_{i,j}^d \left( \partial_{j} w_i \right)^2 \right)^{p/2} dV \right)^{2/p} & \leq d^{2(p-2)/p} \left( \sum_{K \in \mathcal{T}_h} \int_{K}  \sum_{i,j}^d \left| \partial_{j} w_i \right|^p dV \right)^{2/p}. \label{ineq_result_one}
\end{align}
Upon combining this result with Eq.~\eqref{norm_inequality_one}, and the definition for the norm $\left\| \cdot \right\|_{\text{grad},p}$, we obtain the inequality in Eq.~\eqref{ineq_two}.

\end{proof}


{\footnotesize\bibliography{technical-refs}}

\end{document}